\definecolor{Gray}{gray}{0.9}
\newcounter{myalgctr}
\newenvironment{rem}{
   \vskip1mm\indent
   \refstepcounter{myalgctr}
   \textbf{Remark \themyalgctr}
   }{\hfill$\diamond$\par}  
\numberwithin{myalgctr}{section}
\let\hat\widehat
\newtheorem{thm}{Theorem}
\newtheorem*{definition*}{Definition}
\newtheorem*{remark*}{Remark}
\def\namedlabel#1#2{\begingroup
    #2%
    \def\@currentlabel{#2}%
    \phantomsection\label{#1}\endgroup
}
\begin{document}







 \title{\bf Nested Conformal Prediction Sets for Classification with Applications to Probation Data}

\author[1]{Arun K. Kuchibhotla\thanks{Email: {\tt arunku@cmu.edu}.}}
 \author[2]{Richard A. Berk\thanks{Email: {\tt berkr@sas.upenn.edu}.}} 
\affil[1]{Carnegie Mellon University }
 \affil[2]{University of Pennsylvania}








\begin{abstract}
Risk assessments to help inform criminal justice decisions have been used in the United States since the 1920s. Over the past several years, statistical learning risk algorithms have been introduced amid much controversy about fairness, transparency and accuracy. In this paper, we focus on accuracy for a large department of probation and parole that is considering a major revision of its current, statistical learning risk methods. Because the content of each offender's supervision is substantially shaped by a forecast of subsequent conduct, forecasts have real consequences. Here we consider the probability that risk forecasts are correct. We augment standard statistical learning estimates of forecasting uncertainty (i.e., confusion tables) with uncertainty estimates from nested conformal prediction sets.  In a demonstration of concept using data from the department of probation and parole, we show that the standard uncertainty measures and uncertainty measures from nested conformal prediction sets can differ dramatically in concept and output. We also provide a modification of nested conformal called the localized conformal method to match confusion tables more closely when possible.  A strong case can be made favoring the nested and localized conformal approach. As best we can tell, our formulation of such comparisons and consequent recommendations is novel.
\end{abstract}

\noindent%
 {\bf Keywords} 
Conformal Prediction
,Level Sets
,Classification
,Risk Assessment
,Criminal Justice
,Statistical Learning


\section{Introduction}
Risk assessments have been used in the United States to help inform criminal justice decisions since the 1920s~\citep{burgess1928factors}. Typically, these risk assessments forecast whether a given individual, already convicted of a crime, will re-offend. Over the last decade, statistical learning assessments of risk have been introduced in some jurisdictions, often with considerable controversy. Concerns include fairness, transparency, and accuracy. All three are important, but here we focus on accuracy. Excellent treatments of fairness and transparency can be found elsewhere~\citep{carlson2017need,berk2019machine,coglianese2019transparency,huq2018racial,kearns2019ethical,rudin2age}. 
 
In this paper, we analyze proprietary data from a department of probation and parole in a large metropolitan area that has for several years used a statistical learning classifier to assess risk.\footnote
{
We will focus exclusively on probationers who constitute the vast majority of offenders supervised. Parolees are generally supervised at the state level. 
}
As described below, the risk procedure has performed well, but a major revision with current data is under consideration. 

Forecasting accuracy used to justify statistical learning for risk assessment conventionally has been estimated out-of-sample from confusion tables. For example, among offenders forecasted to re-offend, an estimated probability is meant to convey the chances that the projected recidivism will occur. This is an aggregate assessment that at best is an indirect indicator of how a forecast will perform for a particular offender. Moreover, conditioning on the forecast
imposes constraints that neglect important sources of uncertainty. 

Conformal prediction~\citep{vovk2005algorithmic,lei2018distribution,gupta2019nested,romano2020classification,angelopoulos2020uncertainty} is an alternative, generic methodology providing accuracy guarantees for predictions from any machine learning algorithm. In this paper, we emphasize categorical response variables forecasted via nested conformal prediction sets \citep{gupta2019nested}. In conformal prediction, each possible set of outcome classes is evaluated to find the ``best'' prediction set. Unlike forecasting procedures that  provide the best single outcome, conformal prediction can return a set of one or more outcome classes. Associated with the best prediction set is a estimate of the probability that the chosen prediction set contains the true outcome. Such claims have finite sample guarantees. One also can construct the best prediction set for particular offender types, defined by their predictor values. Uncertainty claims for these prediction sets have asymptotic guarantees.

Forecasts and associated uncertainties produced by common statistical learning procedures can differ substantially from the nested conformal approach. Such differences are highlighted in the pages ahead. We see this paper as a demonstration of concept as plans for a new statistical learning, risk assessment, procedure are formulated.

In Section~\ref{sec:policy-description}, we describe the policy setting and introduce the data from which forecasts will be made that illustrate the principal statistical issues. In Section~\ref{sec:nested-conformal-prediction}, we provide some background on conformal prediction and nested conformal prediction for numeric $Y$. In Section~\ref{sec:categorical-Y}, we discuss nested conformal prediction for categorical $Y$, which is of primary interest for our application. We also compare nested conformal prediction with confusion tables and address the discrepancies. In Section~\ref{sec:local-conformal-prediction}, we offer an alternative nested conformal method for categorical $Y$ that can corroborate inferences from a confusion table. Finally, Section~\ref{sec:conclusions} provides several overall conclusions and makes recommendations for practice.

\paragraph{Notation} Throughout the paper, we use $1 - \alpha$ to denote the required probability guarantee as in~\eqref{eq:unconditional-guarantee} and~\eqref{eq:conditional-guarantee}. We will also use $\gamma\in[0, 1]$ to denote intermediate probabilities. For any event $A$, we write $\mathbbm{1}\{A\}$ to denote the indicator, i.e., $\mathbbm{1}\{A\} = 1$ if event $A$ occurs/holds and $\mathbbm{1}\{A\} = 0$ if event $A$ does not occur.

\section{The Policy Setting}
\label{sec:policy-description}

A sentence of probation ordered by a court typically is an alternative to incarceration. A convicted offender is released under supervision, the intensity and content of which can vary widely. For offenders thought to pose a serious risk to public safety or likely to re-offend, the supervision can be very intrusive: home detention, close surveillance by probation officer, unannounced home or workplace visits, frequent drug tests, GPS monitoring, and more. For offenders seen as low risk, supervision can be in name only. In the extreme, a probationer may only need to report once a month to an electronic kiosk to update a home address, contact phone number, and place of work. However, attached to any sentence of probation can be a variety of conditions such as attending AA meetings, anger management counseling, curfews, community service, restitution, and prohibitions from socializing with certain individuals such as gang members. 

Agencies in charge of probation supervision often have wide discretion in how oversight is implemented. Some probation conditions can be relaxed or even removed. New probation conditions can be imposed. Probation officers also have considerable discretion in how strictly the conditions are enforced. For example, a probation officer may choose to overlook one failure to attend an office meeting but not several over 6 months. 

Although there can be good reasons for discretion exercised by probation departments and individual probation officers, there are the dual risks of inefficiencies and inconsistencies across probationers. In response to these and other concerns, a large, urban, department of probation and parole began in 2007 using random forests to inform decisions about how to allocate the intensity and content of probation supervision. Their intent was to provide the least restrictive forms of supervision consistent with the public safety and control of crime. A necessary condition was determining the varying needs and risks for each probationer \citep{ahlman2008appd}. 

The algorithm's forecasting accuracy was promising in test data, and approximately 60\% of the offenders were projected to be arrest-free while under supervision. But by itself, that was insufficient \citep{barnes2010low}. The implications for agency practices were unclear.  

To clarify matters, a randomized field experiment was undertaken to determine if offenders forecasted to be arrest-free could supervised at ``Low-Intensity'' uncompromised by an increase in re-arrests.  1,559 low risk offenders just beginning their probation sentence were identified by the algorithm. Half were randomly assigned to standard supervision and half to were randomly assigned to low intensity supervision. 

Under the low intensity supervision, case loads for each probation officer were increased on average from approximately 150 offenders to 400 offenders, which automatically meant less frequent and lengthy contacts between probationers and probation officers. For example, the number of office visits per year was reduced on average from 4.5 to 2.4. Drug testing was only to be done if court ordered. After three failed drug tests, drug treatment was offered as an alternative to incarceration. 

All of the offenders in the study were followed for one year through their administrative records. Any arrests were recorded. A standard analysis of the experiment showed that the re-arrest proportions were effectively the same for offenders under either supervision regime \citep{barnes2010low}. A second analysis using different methods arrived at the same conclusion \citep{berk2010second}. 

As a result, the department reorganized its practices so that probationers forecasted to be arrest-free were placed in low intensity supervision. Cost savings were allocated to more intensive and expensive supervision for probationers forecasted to be re-arrested. This made the machine learning risk assessments critical. A forecasted outcome class could have dramatic effects on an offender's experience while on probation and impact the agency's budget. 

Over the past decade, the statistical learning risk instrument was retrained twice with more recent data, but there were no material changes in the overall approach. For example, accuracy was still addressed using confusion tables with test data. Over the past year, there has been growing interest in major revisions of the risk instrument. Fairness has been addressed at some length in other work \citep{berk2020using,berk2020improving}. We focus on forecasting accuracy, while making a distinction between the probability that a given forecasted class is correct and the probability that a ``best'' prediction set is correct.
\medskip

\noindent\textbf{Description of the Data.}
Building on our discussion in \ref{sec:policy-description}, we have data on 102,555 offenders who began a sentence of probation between 2009 and 2013 in a large urban jurisdiction. The substantial number of observations means that asymptotic as well as finite sample guarantees are available. For each offender, there are 2 years of followup data characterizing conduct while on probation. At the urging of probation officials and other stakeholders, we use three outcome variable classes: an arrest for a crime of violence, an arrest for a nonviolent crime, and no arrest at all. Concerns about crime typically emphasize violent crime. Distinctions between crimes of violence and nonviolent crimes are critical because public safety is an essential governmental responsibility.  

Seventeen predictors were taken from the usual criminal justice administrative records: rap sheets, charges for the crime(s) that led to the probation sentence, age, gender and the earliest crime for which the offender was charged as an adult. These are the predictors.

\begin{itemize}
\item
Age: in years
\item
Gender: male=1, female=0
\item
Elapsed time since most recent arrest for a violent crime: in years
\item
Elapsed time since most recent arrest for a weapons crime: in years
\item
Age at earliest arrest: in years
\item
Age at earliest arrest for a crime of violence: in years
\item
Prior jail Incarcerations: a count
\item
Prior prison incarcerations: a count
\item
Prior murder arrests: a count
\item
Prior property crime arrests: a count
\item
Prior violent crime arrests: a count
\item
Prior drug distribution arrests: a count
\item
Prior weapons arrests: a count
\item
Current charges for a violence crime: a count
\item
Current charges for a drug crime: a count
\item
Current charges for a property crime: a count
\item
Current charges for a firearm crime: a count
\end{itemize}

The data are seen as realized from a joint probability distribution characterizing all offenders in the relevant jurisdiction sentenced to probation from several years before 2009 to several years after 2013. During that time, there were no major changes in law or administrative procedures that would have materially altered the mix of offenders sentenced or sentencing considerations by local judges. A case for exchangeability is made by noting that data on each offender was effectively realized independently. Whether one individual was placed on probation was arguably unrelated to whether any other individual was placed on probation. For example, cases are tried one at a time and in this jurisdiction, there are many trial judges presiding in different courtrooms.  Independent realizations is sufficient for exchangeability. Datasets $D_1$ ($N = 51278$) and $D_2$ ($N = 51277)$ were constructed as random disjoint splits, which leaves exchangeability intact. 

\medskip
\noindent\textbf{Preliminary Data Analysis.}
The analysis required asymmetric costs for classification errors. For example, failing to correctly identify an offender who later is re-arrested for a violent crime has very different, and arguably more costly, consequences from failing to correctly identify an offender who is arrest free. The \emph{relative} cost of different classification errors should be built into a risk algorithm because otherwise, the forecasts will not properly represent the importance of different kinds of errors; the forecasts of risk can be seriously misleading. 

Arriving at target cost ratios requires input from stakeholders, and in practice, a consensus about relative costs usually is quickly reached. \citep{berk2019machine}. Generally, risk algorithms can be tuned so that target cost ratios are reasonably well approximated in empirical confusion tables. Suppose the target cost ratio for incorrectly classifying an offender, who is actually arrested for violent crime, as not being arrested at all was set at 5 to 1. Likewise, the target cost ratio for incorrectly classifying an offender, who is actually arrested for violent crime, as being arrested for a nonviolent crime was set at 2 to 1. Finally suppose the target cost ratio for incorrectly classifying an offender, who is actually arrested for nonviolent crime, as not being arrested at all also was set at 2 to 1. It is important to stress that such ratios represent subjective value preferences that are determined before a risk algorithm is applied. Different groups of stakeholders might well arrive at different cost ratios leading to different risk results. The target cost ratios just specified are reasonable for illustrative purposes.

Algorithmic training was undertaken with the training data $D_1$. Stochastic gradient boosting for a multinomial outcome was applied using the procedure \textit{XGBoost} in R \citep{chen2016xgboost}. Weighting was introduced to arrive empirically at sufficiently good approximations of the target cost ratios. Table~\ref{tab:confusion-table} is the resulting confusion table constructed for the $D_2$ data, here serving as the test dataset. 

\begin{table}[htp]
\caption{Out-of-sample data confusion table (based on $D_2$) for three outcome classes: no arrest, arrest for a nonviolent crime, arrest for a violent crime (N=51,277) }
\begin{center}
\begin{tabular}{|c|c|c|c|c|}
\hline
\diagbox[width=10em]{Actual}{Predicted} & No Arrest & Nonviolence & Violence & Classification Error\\
\hline\hline
 No arrest & 18661 & 8120 & 3753  & 0.39 \\
 Nonviolence & 3617 & 10274 & 2410 & 0.37 \\
 Violence & 682 & 1009 & 2751 &  0.39 \\
 \hline
 Forecasting Error & 0.19 & 0.47 & 0.69 & \\
  \hline
\end{tabular}
\end{center}
\label{tab:confusion-table}
\end{table}

Entries in Table~\ref{tab:confusion-table} are the case counts. For example, 18,661 subjects out of 51,277 are \emph{correctly} predicted to be not arrested.  Similarly, 8,120 subjects who are not arrested are \emph{incorrectly} predicted to be arrested for a non-violent crime. Classification error and forecasting error are computed as follows. 

Classification error for a particular outcome, say no arrest, is the proportion of subjects \emph{erroneously} predicted to be arrested (for either non-violent/violent crime) among those who are not arrested. Mathematically, 
\begin{equation}\label{eq:classification-error}
\mbox{Classification Error (for no arrest)} := \frac{\sum_{i\in D_2} \mathbbm{1}\{\widehat{Y}_i \neq Y_i, Y_i = \mbox{no arrest}\}}{\sum_{i\in D_2} \mathbbm{1}\{Y_i = \mbox{no arrest}\}}.
\end{equation}
Here, $Y_i$ is the true outcome for subject $i$ in $D_2$ data, and $\widehat{Y}_i$ is the forecasted outcome from the trained classifier (i.e., the outcome with the highest estimated probability). The classification error in~\eqref{eq:classification-error} can be seen as an estimator of $\mathbb{P}(\widehat{Y} \neq Y|Y = \mbox{no arrest})$. From Table~\ref{tab:confusion-table}, the classification error for no arrest is estimated to be $0.39$. 

The forecasting error for a particular outcome, say no arrest, is the proportion of subjects \emph{erroneously} predicted to be not arrested among the subjects that are predicted to be not arrested. Mathematically,
\begin{equation}\label{eq:forecasting-error}
\mbox{Forecasting Error (for no arrest)} := \frac{\sum_{i\in D_2} \mathbbm{1}\{\widehat{Y}_i \neq Y_i, \widehat{Y}_i = \mbox{no arrest}\}}{\sum_{i\in D_2} \mathbbm{1}\{\widehat{Y}_i = \mbox{no arrest}\}}.      
\end{equation}
The notation $Y_i$ and $\widehat{Y}_i$ is unchanged. The forecasting error~\eqref{eq:forecasting-error} can be seen as an estimator of $\mathbb{P}(\widehat{Y} \neq Y|\widehat{Y} = \mbox{no arrest})$. From Table~\ref{tab:confusion-table}, the estimate for no arrest is $0.19$. 


Classification error is important as a gauge of overall algorithmic performance. Hypothetically, when the true outcome class is known, how often is it incorrectly classified? One can see on the right margin of the table that all three outcome classes were misclassified roughly 40\% the time.

When real risk assessments are undertaken, attention shifts to forecasting error. The true outcome class is not known. Rather, a forecasted $\widehat{Y}$ is observed and used to help inform real decisions with real consequences. For how many subjects is the highest probability outcome the wrong outcome?

Forecasting error, reported at the bottom of the the table, ranged from approximately a fifth of the cases (i.e., 0.19) to about two-thirds of the cases (i.e., 0.69). The latter was substantially inflated by a large number classification errors caused by the high relative costs, specified by stakeholders, of erroneously classifying offenders who were actually violence prone, as posing no threat to public safety. A lower cost ratio would reduce the number of inflated risk estimates, but would substantially increase the number of violence prone offenders being overlooked. There are often tradeoffs in confusion tables between classification errors and forecasting errors even though they condition on different things.\footnote
{
Table~\ref{tab:confusion-table} shows that the stakeholder, target cost ratios were well approximated in the $D_2$ data. For example, the target 5 to 1 ratio when offenders who were re-arrested for a violent crime were classified as not re-arrested had an empirical cost ratio of about 5 to 1 (i.e. $3753/682 = 5.5$). In other words, one violence-prone offender misclassified as a risk-free was ``worth'' approximately five risk-free offenders misclassified as violence prone. It is very difficult to exactly reproduce the target cost ratios because proper confusion tables are constructed from test data, not training data. 
}

The likelihood of forecasting errors can be increased when the highest probability forecast is insufficiently definitive. For instance, if for a given offender the estimated probabilities are $0.45$ for no arrest, $0.42$ for an arrest alleging non-violent crime, and $0.13$ for an arrest alleging violent crime, the close proximity of $0.45$ and $0.42$ can imply that the classifier is unable to make an authoritative distinction between the two outcomes. Yet, for a Bayes classifier, the highest probability class automatically becomes the projected outcome. We show later that with conformal prediction sets, it often will be preferable to forecast more than one outcome class to more properly represent forecasting performance. 

Risk assessment results are usefully judged in comparison to current practice. The marginal distribution of the outcome classes represents that practice. A sentence of probation is determined by a sitting judge, and all of the offenders in this study were placed on probation; none were incarcerated. No quantitative forecasts of risk were used, but presumably the judge would not mandate probation for convicted offenders thought to be a public safety risk.\footnote
{
A smaller number of offenders were incarcerated, but they are not part of our study because there is no conduct on probation to measure.
}  

Suppose for the moment that implicitly sentencing judges makes decisions by the equivalent of a Bayes classifier. According to the empirical marginal distribution of the outcome classes, the ``default'' forecasts would be no subsequent arrest. In fact, 59\% of the offenders are not arrested. But 41\% of the offenders we were. The 41\% can be seen as erroneous forecasts.

With the available data analyzed using stochastic gradient boosting, Table~\ref{tab:confusion-table} shows that if no arrest is predicted, 19\% of the cases would be incorrectly forecasted. This is a meaningful improvement in forecasting accuracy compared to 41\%. It is bolstered by the cost ratios making failures to correctly classify new arrests, especially for violent crimes, relatively costly. The boosting algorithm works hard to avoid such errors; the counts of 3617 and 682 are smaller than they would have been if their relative costs were reduced. 

There are smaller improvements for the other two outcome classes had they been chosen by some other implicit forecasting method. If an arrest for a nonviolent crime is forecasted from the marginal distribution of the outcome classes, it would be wrong 68\% of the time. The same forecast from the risk algorithm would be wrong 47\% of the time. If an arrest for a violent crime is forecasted from the marginal distribution of the outcome classes, it would be wrong 91\% of the time. The same forecast from the risk algorithm would be wrong 69\% of the time. These are meaningful improvements in $D_2$ data that might be improved substantially with different cost ratio weighting. In short, by these measures of performance, forecasting error is reduced a nontrivial amount.\footnote
{
The random, disjoint split into $D_1$ and $D_2$ introduces additional uncertainty into the results. To get sense, we repeated the analysis several times, each with a random reconstruction of $D_1$ and $D_2$. There were noticeable changes in the results that typically improved the aggregate assessments of performance a bit. But, for purposes of this paper, they made no material difference.
} 


 Many stakeholders are more confident about statistical learning forecasts when the dominant predictors comport well with expectations.
 \emph{XGBoost} provides a measure of each predictor's ``importance.'' Importance is measured as the average contribution to the fit for each tree used by the boosting algorithm in mean squared error units, then standardized so that the importance measure over all predictors sums to 100\%. This is an \emph{in-sample} approach distantly related to forecasting performance. There are more appropriate measures in forecasting settings,  available with other algorithms \citep[Section 10]{breiman2001random}. 

Four predictors of the seventeen are responsible for about half the fit: age, the number of prior jail sentences, the recency of latest prior arrest for a violent crime, and the earliest age at which an offender was arrested. Other analyses using partial dependence plots showed that the risks increase for younger offenders who had a larger number of prior jail sentences, who had an arrest for a violent crime more recently, and who began their criminal activities at an earlier age. No causal claims are implied because an algorithm is not a model, but none of the associations are a surprise \citep{berk2009role,berk2019machine}. 

 Without discussing at some length the criminal justice setting and the policy issues at stake, it is difficult to address whether the risk algorithm is performing well enough. Much of that discussion depends on what the future holds when the proposal to revise the risk algorithm proceeds. As the Black Lives Matter agenda illustrates well, the political environment is extremely volatile. Perceived fairness and transparency will matter as well. Still, one can properly claim that the algorithm's output is similar to the output of other algorithmic risk assessments in criminal justice for which the proper yardstick is current practice, not perfection \citep{berk2019machine}. 
 
But there is more. A risk algorithm with acceptable performance overall can be used to make forecasts for individual offenders. So far, our empirical results are silent on such matters. Yet as noted above, probation forecasts can have life-changing consequences. We will return to properties of classifier forecasts and confusion tables that apply to individual offenders after considering conformal prediction. Conventional classifiers and confusion tables when used to make forecasts about individuals can perform rather differently from conformal methods.  

\section{Some Background for Conformal Prediction}
\label{sec:nested-conformal-prediction}

Statistical learning classifiers commonly are used to make predictions about single cases. For ease of exposition, we consider for the moment binary categorical response variables. Illustrations include whether a spot on a lung X-ray is a cancer precursor~\citep{yan2018weakly}, whether a mortgage applicant will repay the loan~\citep{chen2020predicting}, or whether a particular hurricane will make landfall~\citep{alemany2019predicting}. One needs procedures that properly determine the precision and uncertainty of such forecasts.

Using the example of a hurricane landfall, the four possible prediction sets are \{Yes\}, \{No\}, \{Yes, No\} and $\emptyset$ (i.e., the empty set). The first could be a very informative if one can claim it is the true outcome class with a probability such as 0.90. Evacuation orders might follow. The second could be very informative if one can claim it is the true outcome class with a similarly high probability. There would be no evacuation orders. The third is an exercise in statistical humility; the forecasting procedure is unable to chose a single outcome class. This too would be important to know, perhaps making other information more important. The final prediction set can mean that the hurricane is atypical, and that the existing data do not apply. But this too may contain useful information. At a time when the consequences of climate change are rapidly materializing, it would be important to consider if a particular weather event is unprecedented.

The inferential task in these examples can be formulated in an unconditional manner; one only asks for guarantee \emph{on average} over all the forecasts. No guarantee is provided for a particular configuration of prediction values. Returning to a criminal justice setting, consider the task of predicting in a binary fashion whether an individual on probation is arrested for a violent crime based on the predictor vector $X = (X_1, X_2, X_3)$, with $X_1$ as age, $X_2$ as gender, and $X_3$ as the number of prior arrests. A prediction set $\widehat{C}(x)$ is said to have an \emph{unconditional} guarantee if for any future offender with predictor vector $X_{\mathrm{f}}$ and the true response $Y_{\mathrm{f}}\in\{0, 1\}$ one has
\begin{equation}
\label{eq:unconditional-guarantee}
\mathbb{P}\left(Y_{\mathrm{f}} \in \widehat{C}(X_{\mathrm{f}})\right) \ge 1 - \alpha,
\end{equation}
where $1 - \alpha$ is the specified probability such as $0.90$.\footnote
{
An arrest is coded as ``1,'' and an absence of an arrest is coded ``0.''
} 
In words,~\eqref{eq:unconditional-guarantee} conveys that if one constructs a $\widehat{C}(\cdot)$ for each of 100 future offenders, about $(1-\alpha)$ of these subjects will have their true response $Y_{\mathrm{f}}$ within their corresponding prediction set $\widehat{C}(X_{\mathrm{f}})$. This is the approach taken by conformal inference, whether the outcome is numeric or categorical, and it has strong finite sample guarantees~\citep{vovk2005algorithmic}. 

Forecasts can also be made in a conditional manner. For example, the distribution for whether there is a re-arrest for a violent crime can differ dramatically between male and female offenders and between younger and older offenders. In contrast to~\eqref{eq:unconditional-guarantee}, a prediction set $\widehat{C}(x)$ is said to have a conditional guarantee if for any future offender with predictor vector $X_{\mathrm{f}} = x$ and the true response $Y_{\mathrm{f}}\in\{0,1\}$, one has
\begin{equation}\label{eq:conditional-guarantee}
\mathbb{P}\left(Y_{\mathrm{f}}\in \widehat{C}(X_{\mathrm{f}})\,\big|\,X_{\mathrm{f}} = x\right) \ge 1 - \alpha\quad\mbox{for all}\quad x,
\end{equation}
where, again, $1-\alpha$ is the required probability such as $0.90$. In comparison to~\eqref{eq:unconditional-guarantee},~\eqref{eq:conditional-guarantee} provides the guarantee for any specific configuration of predictor values, and as before, whatever the forecast happens to be. 

To briefly consider the practical importance of the conditional guarantee~\eqref{eq:conditional-guarantee}, suppose $x = (26, \text{Male}, 5)$; the individual is a male of age 26 with 5 prior arrests. Then~\eqref{eq:conditional-guarantee} conveys that if one provides a $\widehat{C}(\cdot)$ for each of 100 future male offenders of age 26 with 5 priors, about $(1-\alpha)$ of them will have their true response $Y_{\mathrm{f}}$ lie within a particular prediction set $\widehat{C}(26, \text{Male}, 5)$. Although each offender is processed one at a time, there is a distribution on the responses even after fixing the predictors to be $(26, \text{Male}, 5)$. Nothing specific is offered about a single offender. Still, one has a guarantee for sub-groups of offenders with the same configuration of predictor values.

Although unconditional guarantee~\eqref{eq:unconditional-guarantee} can be provided for finite samples as shown in~\cite{vovk2005algorithmic,lei2018distribution}, \cite{foygel2019limits} proved that the conditional guarantee~\eqref{eq:conditional-guarantee} is in general impossible to attain in finite samples. It is, however, possible to provide confidence sets satisfying the unconditional guarantee~\eqref{eq:unconditional-guarantee} in finite samples and to satisfy~\eqref{eq:conditional-guarantee} when the sample size is large enough. These results can be achieved using the \emph{nested} conformal approach advanced in~\citet[Appendix D]{gupta2019nested}; see also~\cite{izbicki2019flexible}. This approach is applicable for categorical and numeric responses. There are many settings in which conditional inference is preferable. In our application, it helps to insure that ``similarly situated'' offenders are treated similarly. 

Because of our policy setting, we emphasize nested conformal prediction sets for categorical response variables. \citet[Proposition 1]{gupta2019nested} prove the formal properties for the nested approach but their discussion focuses largely on numeric outcomes. Although the theory carries over, there are a number of practical considerations for categorical outcomes that need to be unpacked. Because conformal inference is relatively new to many statisticians, we include a substantial didactic material. Readers already familiar with nested conformal inference might wish to skim to Section~\ref{sec:categorical-Y}.


\medskip

\noindent\textbf{Nested Conformal Prediction Regions for a Numeric $Y$.}
For a numeric/continuous $Y$, one has prediction intervals. For a categorical $Y$, one has prediction sets. We build, in particular, on prediction sets that are nested. For didactic purposes, however, we begin with a very brief discussion of nested prediction regions for a numeric $Y$. Details for categorical $Y$ are given in Section~\ref{sec:categorical-Y}.

Suppose we have exchangeable data $(X_i, Y_i), 1\le i\le n$ from a distribution $P,$ and the aim is to construct a region $\widehat{C}(x)$ such that if  a new case $(X_{n+1}, Y_{n+1})$ is exchangeable with $(X_i, Y_i), 1\le i\le n$, $\mathbb{P}(Y_{n+1}\in\widehat{C}(X_{n+1})) \ge 1 - \alpha.$ Intutively, smaller prediction regions for $\widehat{C}(x)$ are preferred; they are more precise. But the best prediction region depends on the \emph{true} conditional density $p(y|X = x) = p(y|x)$ and is given by
\begin{equation}\label{eq:oracle-prediction}
\widehat{C}^{\mathrm{oracle}}(x) ~:=~ \left\{y\,\big|\,p(y|x) \ge t^{\mathrm{oracle}}(\alpha, x)\right\},
\end{equation}
where a threshold $t^{\mathrm{oracle}}(\alpha; x)$ is chosen to be the largest possible value such that~\eqref{eq:oracle-prediction} holds for $\widehat{C}^{\mathrm{oracle}}(x)$. A higher threshold implies a smaller prediction region, but the best prediction region cannot be computed in practice because the \emph{true} conditional density $p(y|x)$ and hence, $t^{\mathrm{oracle}}(\alpha, x)$, are unknown.

\begin{figure}[htbp]
\begin{center}
\includegraphics[width=\textwidth, height=3.2in]{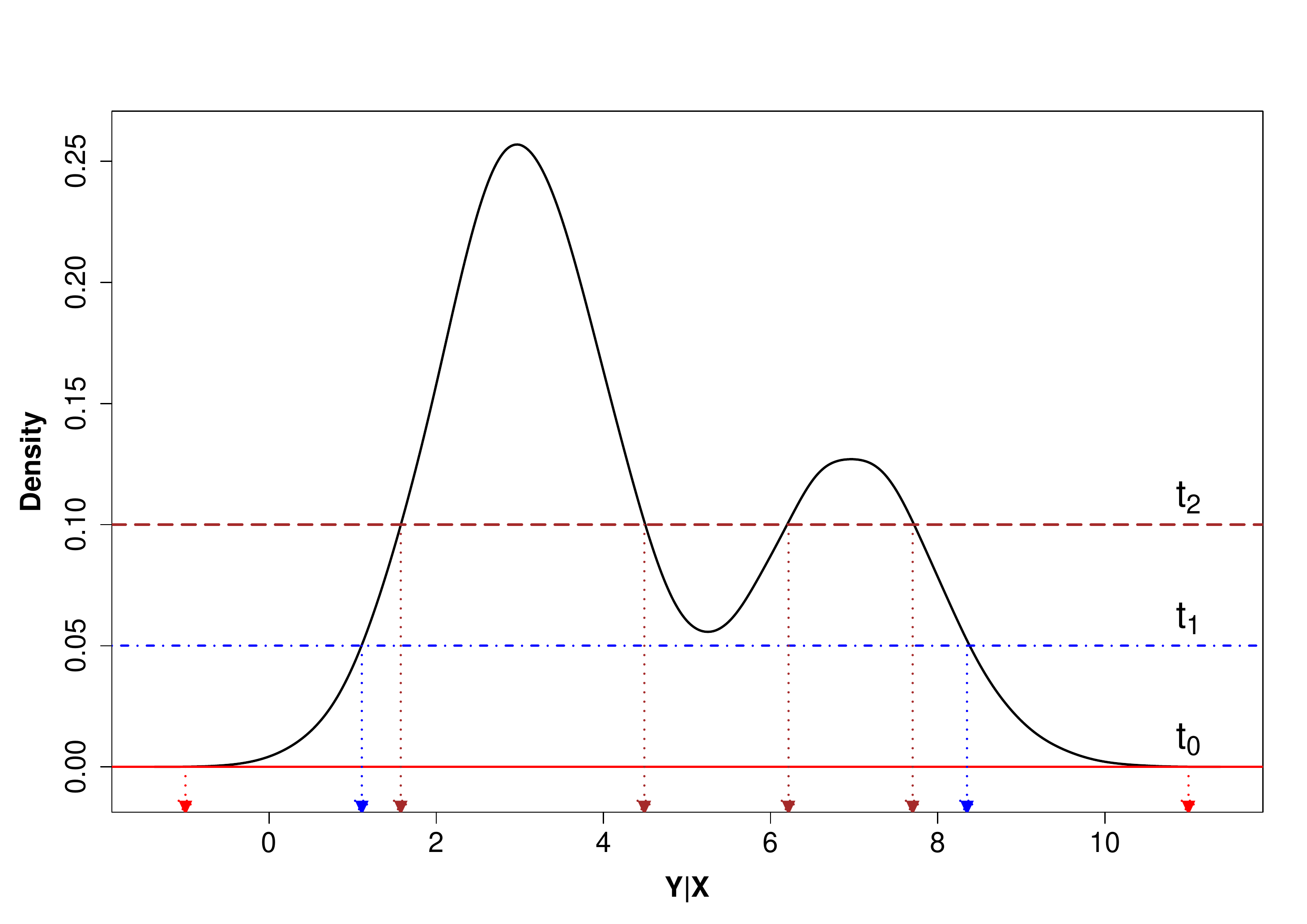}
\caption{Nested Prediction Regions for a Numeric $Y|\textbf{X}$ = $\textbf{x}$ with Three Illustrative Thresholds}
\label{fig:numeric}
\end{center}
\end{figure}
Estimation is addressed shortly. For now, we illustrate properties of the  ``oracle'' prediction region with a bimodal density. Figure~\ref{fig:numeric} is a plot of the true density $p(y|x)$ (for a specific $x$) with three different thresholds. For the threshold at $t_0 = 0.0$ (solid line), the region $\{y|p(y|x) \ge t_0\} = \{y|p(y|x) \ge 0\}$ is the set of all possible $y$'s between the two red arrowheads. The coverage probability for this set is $1$. Increasing the threshold to $t_1$ (in dash-dotted line) removes some response values and concurrently, the coverage guarantee is reduced. The set of possible $y$'s is represented by the values between the blue arrowheads, and the area under the density above $t_1$ is smaller than for $t_0$. The same reasoning applies of $t_2$ (in dashed line). The set of possible $y$'s corresponds to the values between the two brown arrowheads on the left combined with the values between the two brown arrowheads on the right.

Increasing the threshold $t$ to the top of the peak on the left, leaves no value of $y$ that satisfies $p(y|x) \ge t$, and hence, the set $\{y|\,p(y|x) \ge t\}$ is empty. The coverage probability for this set is $0$. Changing the threshold $t$ from $0$ to the top of the peak, the coverage probability for the set $\{y|\,p(y|x) \ge t\}$ drops smoothly from $1$ to $0$ and equals the required probability of $1 - \alpha$ at some threshold. That special threshold is $t^{\mathrm{oracle}}(\alpha, x)$, and is \emph{the best prediction set} given $1-\alpha$. 

Estimation of best prediction set is central to all that follows. Consider an estimate $\widehat{p}(y|X = x) = \widehat{p}(y|x)$ for the true conditional density $p(y|x)$. Using this estimator, we obtain an estimate $\widehat{t}(\alpha, x)$ of the oracle threshold $t^{\mathrm{oracle}}(\alpha, x)$. 

The ``naive'' set 
\begin{equation}\label{eq:naive-prediction}
\widehat{C}^{\mathrm{naive}}_{\alpha}(x) := \{y:\,\widehat{p}(y|x) \ge \widehat{t}(\alpha, x)\},
\end{equation}
cannot have any finite sample guarantee because $\widehat{p}(y|x)$ is an estimate with unknown arbitrary accuracy. But for any $\gamma\in[0, 1]$, and $\mathbb{P}(Y_{\mathrm{f}} \in \widehat{C}^{\mathrm{naive}}_{\gamma}(X_{\mathrm{f}}))$ not greater than $1 - \gamma$, there is a $1 - f(\gamma)$, for some function $f(\cdot)$ that provides for the desired finite sample guarantees. By making $\gamma = f^{-1}(\alpha)$,
\[
\mathbb{P}\left(Y_{\mathrm{f}} \in \widehat{C}^{\mathrm{naive}}_{\gamma}(X_{\mathrm{f}})\right) = 1 - f(\gamma) = 1 - f(f^{-1}(\alpha)) = 1 - \alpha.
\]
One has a form of calibration. Because $\widehat{C}^{\mathrm{naive}}_{\alpha}(\cdot)$ does not have the desired coverage, $\alpha$ is altered so that the coverage becomes the original $1 - \alpha$. However, $f(\cdot)$ is unknown and hence, conformal prediction is used to calibrate the coverage. The mathematical details of $\widehat{t}(\alpha, x)$, as well as calibration for a numeric $Y$, is beyond the scope of this paper. Interested readers should consult~\citet[Appendix D]{gupta2019nested}.

With this background in place, the next section considers categorical outcome classes in detail. 

\section[Nested for Categorical]{Nested Conformal Prediction Sets for Categorical $Y$}\label{sec:categorical-Y} 

For a categorical $Y$, the basic ideas are similar, but the details differ substantially because the thresholds are applied to a probability distribution, not a probability density, and response classes are on the horizontal axis. Just as in Figure~\ref{fig:numeric}, the threshold can increased smoothly, but the probability above the threshold goes from 1 to 0 in abrupt steps. Included y-classes change abruptly too. For a given value of $\alpha$, the goal is, as before, to determine the best prediction set for $1 - \alpha$ using calibration. The abrupt transitions complicate the process.

To be consistent with the probation application, three outcome classes are used. The reasoning to follow applies equally well when there are two outcome classes or more than three. However, with two classes, important ideas are lost, and with more than three classes, foundational ideas can be obscured by additional details.

Consider first an  ``oracle'' probability distribution for a categorical $Y$ and a fixed $x$ with three outcome classes coded as $0, 1, 2$. Their probabilities are
\[
p(0|x) = 0.43,\quad p(1|x) = 0.35,\quad\mbox{and}\quad p(2|x) = 0.22,
\]
displayed in Figure~\ref{fig:pmf-plot} by three vertical lines in gray. Three illustrative thresholds are shown as horizontal lines in red, blue or green. The vertical axis is in probability units.

\begin{figure}[htbp]
\begin{center}
\includegraphics[width = \textwidth, height=3in]{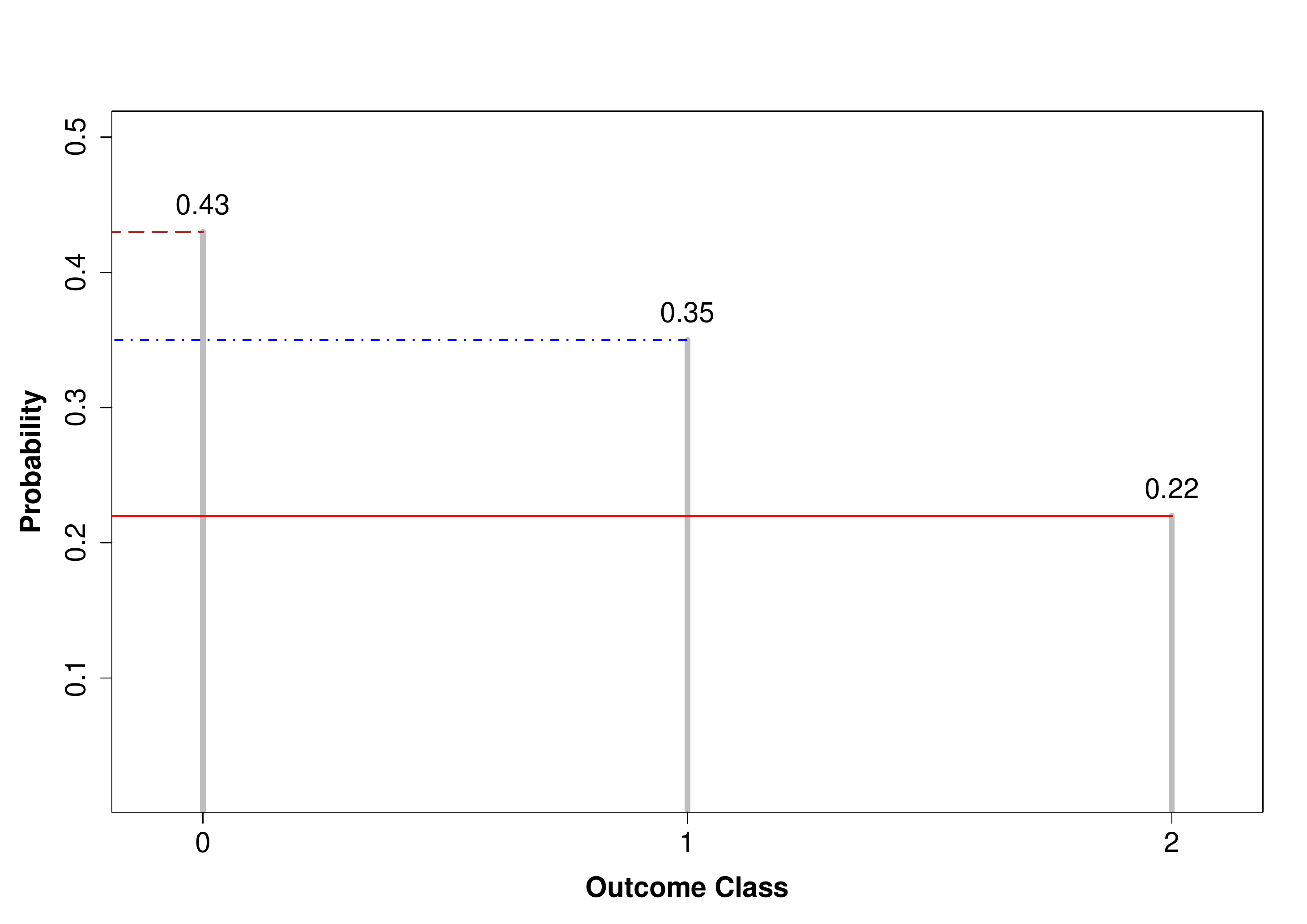}
\caption{For a single case: an oracle conditional probability distribution $Y|\textbf{X}$ = $\textbf{x}$ for a response variable with three classes having three probabilities 0.43, 0.35, 0.22 and three transitional thresholds in brown, blue, and red, respectively.}
\label{fig:pmf-plot}
\end{center}
\end{figure}

For the full oracle prediction set $\{y\in\{0, 1, 2\}:\, p(y|x) \ge t^{\mathrm{oracle}}(\alpha, x)\}$, where the threshold $t^{\mathrm{oracle}}(\alpha, x)$ is chosen so that the region has $1 - \alpha$ coverage. This closely parallels the earlier discussion for  a numeric $Y$. As the threshold is smoothly increased, the probability above the threshold is the sum of the class probabilities for the remaining classes. That probability at the red threshold and above is $0.22 + 0.35 + 0.43 = 1.0.$ The coverage for \{0,1,2\} is 1.0. Should the specified $\alpha = 0.30$, making $0.70$ the appropriate coverage, the actual coverage very conservative, and all three outcome classes are in the prediction set. Clearly, this is an unsatisfying  result. The probability at or above the blue threshold is $0.35 + 0.43 = 0.78$. Coverage for \{0,1\} is 0.78. This prediction set is smaller, but the actual coverage is still a bit conservative. Finally, the probability at or above the green threshold is $0.43$. Coverage for \{0\} is 0.43, which falls below 0.70, and contravenes the coverage target. 

Table~\ref{tab:threshold-calculation-conformal} provides an arithmetic translation of these relationships. Consider in column 1 all illustrative thresholds increasing from a value of 0.0 to include but not exceed a value 0.22. For each, all outcome classes are included with a coverage of 1.0, which is larger than the threshold value. For all such thresholds, one has a valid but, as before, a very conservative and uninteresting prediction set.

For an arbitrarily small $\delta$, there are new thresholds beginning at $\delta$ above 0.22 and ending at, but not exceeding, .35. None of these thresholds are above $0.43 + 0.35=.78$. The included outcome classes are \{0,1\}; \{2\} no longer qualifies. (See Figure~\ref{fig:pmf-plot}.) 

From thresholds from $\delta$ above 0.35 up to but not exceeding 0.43, the single class \{0\} remains; the other two classes are eliminated. (See again Figure~\ref{fig:pmf-plot}.) Although the remaining probability is still greater than the threshold values, the coverage falls well below the desired 0.70. 

For threshold values between $0.43 + \delta$ and $1.0$, the sum of the class probabilities is equal to 0.0 and, therefore, smaller than all of the threshold values. There are no outcome classes left, and one has the empty prediction set $\emptyset$. Clearly, this result useless, at least from a policy perspective. 

\begin{table}[htp]
\caption{Computing $t^{\mathrm{oracle}}(\alpha, x)$ for $\alpha = 0.30$ to obtain a prediction set for a categorical $Y$ with three classes $\{0,1,2\}$: $p(0|x)=0.43$, $p(1|x)=0.35$, and $p(2|x)=0.22$. $\delta > 0$ is an arbitrarily small number. The rows where the threshold changes the sum of probabilities are shaded gray.}
\begin{center}
\begin{tabular}{|c|c|c|}
\hline 
Threshold $t$ & $\{y:p(y|x) \ge t\}$ & Sum of Class Probabilities  \\
\hline\hline
0.0 &\{0,1,2\} & 0.43 + 0.35 + 0.22 = 1.0 \\
0.10 & \{0,1,2\} & 0.43 + 0.35 + 0.22 = 1.0 \\
0.20 & \{0,1,2\} & 0.43 + 0.35 + 0.22 = 1.0 \\
\rowcolor{Gray}
0.22 & \{0,1,2\} & 0.43 + 0.35 + 0.22 = 1.0 \\
\rowcolor{Gray}
0.22 + $\delta$ & \{0,1\} & 0.43 + 0.35 = 0.78 \\
0.30 & \{0,1\} & 0.43 + 0.35 = 0.78 \\
\rowcolor{Gray}
0.35 & \{0,1\} & 0.43 + 0.35 = 0.78 \\
\rowcolor{Gray}
0.35 + $\delta$ & \{0\} & 0.43\\
0.40 & \{0\} & 0.43 \\
\rowcolor{Gray}
0.43 & \{0\} & 0.43 \\
\rowcolor{Gray}
0.43 + $\delta$ & $\emptyset$ & 0.0 \\
0.5 & $\emptyset$ & 0.0\\
1.0 & $\emptyset$ & 0.0\\
\hline 
\end{tabular}
\end{center}
\label{tab:threshold-calculation-conformal}
\end{table}

From this reasoning and, say, $\alpha = 0.30$, one can choose any threshold with values between $[0, 0.35]$ and guarantee coverage of at least $0.7$. By convention and in the spirit the discussion for a numeric $Y$, we take the largest threshold satisfying the coverage, and this yields $t^{\mathrm{oracle}}(0.3, x) = 0.35$. 
\medskip

\noindent\textbf{Estimation.}
In practice, an estimator for the oracle probability distribution is required, which means that the class probabilities and thresholds need to be estimated as well. A threshold now is denoted by $\widehat{t}(\gamma,x)$, where $\gamma$ can be seen as a provisional value of $\alpha$ such that $1-\gamma$ is a provisional coverage. The term ``provisional'' is employed because the impact of different $\gamma$-values will be examined. Using reasoning much like that for the oracle setting, a given estimator $\widehat{p}(y|x)$ for $y\in\{0, 1, 2\}$, and any $\gamma\in[0, 1]$, one can obtain $\widehat{t}(\gamma, x)$ and its ``naive'' prediction set
\[
\widehat{C}_{\gamma}^{\mathrm{naive}}(x) ~:=~ \{y\in\{0,1,2\}:\,\widehat{p}(y|x) \ge \widehat{t}(\gamma, x)\}.
\]

In words, a naive prediction set for a specified $\gamma$ includes one or more outcome classes, here from \{0,1,2\}, such that their estimated probability is equal to or larger than the provisional $\hat{t}(\gamma,x)$ for the provisional $\gamma$.

As noted in Section~\ref{sec:nested-conformal-prediction}, the naive prediction set does not have a coverage guarantee because $\widehat{p}(y|x)$ is only an estimate. It is possible to do better using calibration. If $\widehat{p}(y|x) \equiv p(y|x)$, then $\widehat{C}_{\gamma}^{\mathrm{naive}}(x)$ has the $1 - \gamma$ coverage guarantee. One implication that a researcher's choice of classifier can help insofar as the 
$p(y|x)$ is better approximated. Moreover, a more accurate estimator $\widehat{p}(\cdot|\cdot)$ can reduce the $\gamma$ calibration burdens and facilitate a closer representation of the conditional validity~\eqref{eq:conditional-guarantee}. 

\medskip

\noindent\textbf{Calibration using Conformal Prediction.}
Rather than trying to alter $\widehat{p}(\cdot|\cdot)$, calibration proceeds directly to goal of the best prediction set, not indirectly by trying to make $\widehat{p}(y|x) \equiv p(y|x)$. The value of $\gamma$ for $\widehat{C}_{\gamma}^{\mathrm{naive}}(x)$ is altered such that finite sample $1 - \alpha$ coverage is achieved. Stated more formally, calibration seeks a value of $\gamma$, such that $\gamma = \gamma(\alpha)$, from which
\[
\mathbb{P}\left(Y_{\mathrm{f}} \in \widehat{C}_{\gamma(\alpha)}^{\mathrm{naive}}(X_{\mathrm{f}})\right) ~\ge~ 1 - \alpha.
\]
Because the probability of an event can be approximated by averaging over sample points, $\gamma(\alpha)$ can be effectively estimated as
\begin{equation}\label{eq:calibration}
\frac{1}{|D_2|}\sum_{i\in D_2}\mathbbm{1}\left\{Y_i \in \widehat{C}^{\mathrm{naive}}_{\gamma(\alpha)}(X_i)\right\} \ge 1 - \alpha.
\end{equation}
For an exact finite sample guarantee, the right hand side should be replaced by $(1 + 1/|D_2|)(1 - \alpha)$; see Lemma 2 of~\cite{romano2019conformalized} for a proof. This small increment in $(1-\alpha)$ is a requirement for valid inference.

\subsection*{Conformal Scores}
Conformal scores are the building blocks for calibration. For any $(x, y)$, a conformal score is defined as
\begin{equation}\label{eq:nested-conformal-score}
s(x, y) ~:=~ \sup\left\{\gamma\in[0,1]:\,y\in\widehat{C}^{\mathrm{naive}}_{\gamma}(x)\right\}.
\end{equation}
In words, $s(x, y)$ is the largest $\gamma\in[0, 1]$  such that the naive prediction set at $\hat{t}(\gamma, x)$ contains at least one element of $y$ corresponding to an actual outcome class. For example, the class might be no arrest while on probation. Illustrations are provided shortly in Table~\ref{tab:conformal_score}. 

There is nesting because $\widehat{C}^{\mathrm{naive}}_{\gamma}(x)$ is monotonically decreasing in $\gamma$, such that for $\gamma_1 \le \gamma_2$, 
\[
\widehat{C}^{\mathrm{naive}}_{\gamma_2}(x) ~\subseteq~ \widehat{C}^{\mathrm{naive}}_{\gamma_1}(x).
\]

Because the naive prediction sets are nested from $\widehat{C}_{\gamma = 0}^{\mathrm{naive}}(x) = \{0, 1, 2\}$ to $\widehat{C}^{\mathrm{naive}}_{\gamma = 1}(x) = \emptyset$, the threshold can be increased until the largest $\gamma$ is chosen such that $y\in\widehat{C}^{\mathrm{naive}}_{\gamma}(x)$ is true. The nestedness of the sets and the conformal score as defined in~\eqref{eq:nested-conformal-score} constitute the nested conformal framework in~\cite{gupta2019nested}.  

To help fix the ideas in~\eqref{eq:nested-conformal-score}, consider some new illustrative calculations for Table~\ref{tab:conformal_score} with the goal of computing a proper conformal score. The oracle probability distribution now is treated as a property of a single, realized, exchangeable case. This is fundamentally different from the earlier setting in which Table~\ref{tab:threshold-calculation-conformal} was placed, although some of the reasoning and calculations will be familiar. Attention centers on the role of $\widehat{t}(\gamma, x)$ for all $\gamma\in[0, 1]$. The outcome classes remain \{0,1,2\} and, as before, suppose $\widehat{p}(0|x) = 0.43$, $\widehat{p}(1|x) = 0.35,$ and $\widehat{p}(2|x) = 0.22$.

Starting at the top of Table~\ref{tab:conformal_score}, one has a $\widehat{t}(\gamma,x)$ with $\gamma = 0.0$. In this role, $\gamma$ is a place-holder for $\alpha$ determining the provisional prediction set coverage. Moving downward, the last row for which the sum of the three outcome class probabilities at least $1 - \gamma = 1.0$ determines an initial $\widehat{t}(\gamma,x)$, which applies to all $\gamma < 0.22$. Because $\delta$ can be arbitrarily small, $\widehat{t}(0, x) = 0.22$. 
Similarly, for any $\gamma\in[0, 0.22)$, $\widehat{t}(\gamma, x) = 0.22$, and the prediction set is the same. Much as for a numeric $Y$, higher thresholds are preferred, other things equal, because the prediction regions are smaller even though in this case the prediction set does not change for $\gamma\in[0, 0.22)$. 

When $\gamma = 0.22$, $1 - \gamma = 0.78$ (i.e. $.35 + .43)$. By the same reasoning, the last instance when the sum of probabilities in Table~\ref{tab:conformal_score} is at least $0.78$ is for $\widehat{t}(\gamma,x)$ with $\gamma < 0.57$. One has for $\gamma\in[0.22, 0.57)$, a different $\widehat{t}(\gamma, x) = 0.35$, and prediction set is \{0,1\}. 

For $\gamma = 0.57$, $1 - \gamma = 0.43$, the last row for which the sum of probabilities in Table~\ref{tab:conformal_score} is at least $0.43$. Hence, $\widehat{t}(0.57, x) = 0.43$. A Similar argument implies that for $\gamma\in[0.57, 1)$, $\widehat{t}(\gamma, x) = 0.43$. Finally, $\widehat{t}(1, x) = 1$. 

For the calculations summarized in Table~\ref{tab:conformal_score}, one imagines fixing the threshold at the smallest outcome class probability and increasing the value of $\gamma$ until Equation~\ref{eq:nested-conformal-score} is satisfied. Then, one fixes the threshold at next highest outcome class probability and again, increases the value of $\gamma$ until Equation~\ref{eq:nested-conformal-score} is satisfied. This process is repeated until the outcome classes are exhausted. 

\begin{table}[htp]
\caption{Construction of conformal score for categorical $Y|\textbf{X}$ = $\textbf{x}$ with three classes 0, 1, and 2: $\widehat{p}(0|x)=0.43$, $\widehat{p}(1|x)=0.35$, and $\widehat{p}(2|x)=0.22$. $\delta > 0$ is an arbitrarily small number. The rows where a change in $\gamma$ changes the prediction set are shaded gray.}
\begin{center}
\begin{tabular}{|c|c|c|}
\hline
$\gamma$ & $\widehat{t}(\gamma, x)$ & $\widehat{C}^{\mathrm{naive}}_{\gamma}(x) = \{y:\widehat{p}(y|x) \ge \widehat{t}(\gamma,x)\}$  \\
\hline\hline
0.0 & 0.22 & \{0, 1, 2\} \\
0.20 & 0.22 & \{0, 1, 2\} \\
\rowcolor{Gray}
$0.22 - \delta$ & 0.22 & \{0, 1, 2\}\\
\rowcolor{Gray}
0.22 & 0.35 & \{0, 1\} \\
0.30 & 0.35 & \{0, 1\}\\
\rowcolor{Gray}
$0.57 - \delta$ & 0.35 & \{0, 1\}\\
\rowcolor{Gray}
0.57 & 0.43 & \{0\}\\
0.7 & 0.43 & \{0\}\\
\rowcolor{Gray}
$1.0 -\delta$ & 0.43 & \{0\}\\
\rowcolor{Gray}
1.0 & 1.0 & $\emptyset$\\
\hline 
\end{tabular}
\end{center}
\label{tab:conformal_score}
\end{table}

It follows that for Table~\ref{tab:conformal_score}, there are three conformal scores to be computed: $s(x, 0), s(x, 1),$ and $s(x, 2)$. Starting with $s(x, 0)$, one requires the largest $\gamma$ such that $0$ belongs in $\widehat{C}^{\mathrm{naive}}_{\gamma}(x)$. From Table~\ref{tab:conformal_score}, this is the set of all $\gamma$ such that $0\in\widehat{C}_{\gamma}^{\mathrm{naive}}(x)$ is $[0, 1)$, the supremum of which is $1$. Consequently, $s(x,0) = 1$. For $s(x, 1)$, the set of all $\gamma$ such that $1\in\widehat{C}_{\gamma}^{\mathrm{naive}}(x)$ is $[0, 0.57)$, the supremum of which is $0.57$. Consequently, $s(x, 1) = 0.57$. Finally, the set of all $\gamma$ such that $2\in\widehat{C}_{\gamma}^{\mathrm{naive}}(x)$ is $[0, 0.22)$, the supremum of which is $0.22$. Consequently, $s(x, 2) = 0.22$. In short, the three conformal scores, one for each outcome class from \{0,1,2\}, are
\[
s(x, 0) = 1,\quad s(x, 1) = 0.57,\quad\mbox{and}\quad s(x, 2) = 0.22.
\]
Such reasoning conveys why $s(x, y)$ is called a conformity score. For classifiers providing output probabilities, if $\widehat{y}$ is the chosen label for some $x$ (i.e. the fitted classs is the same as the observed class), $s(x, \widehat{y})$ will have the largest value among the outcome class conformal scores. Here, from the estimated probability distribution, $0$ is the most likely outcome and hence, $(x, 0)$ conforms most closely to a classifier's highest probability selection. It is appropriate, therefore, that $s(x, 0)$ is the largest conformal score. The second most probable $s(x, 1)$ conforms less well, and the third most probable $s(x, 2)$ conforms least well. 

More formally, let $\widehat{\pi}(0), \widehat{\pi}(1), \widehat{\pi}(2)\in\{0,1,2\}$ re-define the class labels such that
\begin{equation}\label{eq:ordering-probabilities}
\widehat{p}(\widehat{\pi}(0)|x) ~>~ \widehat{p}(\widehat{\pi}(1)|x) ~>~ \widehat{p}(\widehat{\pi}(2)|x).
\end{equation}
The new outcome classes are arranged in order from the largest estimated probability to the smallest and re-labeled respectively as $\widehat{\pi}(0), \widehat{\pi}(1),$ and $\widehat{\pi}(2)$. For example, is an arrest for a nonviolent crime has the second largest probability, it is denoted by $\widehat{p}(\widehat{\pi}(1))$. Then, 
\begin{equation}\label{eq:conformal-scores-ordering}
    \begin{split}
        s(x, \widehat{\pi}(0)) ~&=~ 1,\\
        s(x, \widehat{\pi}(1)) ~&=~ \widehat{p}(\widehat{\pi}(1)|x) + \widehat{p}(\widehat{\pi}(2)|x),\\
        s(x, \widehat{\pi}(2)) ~&=~ \widehat{p}(\widehat{\pi}(2)|x).
    \end{split}    
\end{equation}
For a concrete understanding of the formulae~\eqref{eq:ordering-probabilities} and~\eqref{eq:conformal-scores-ordering}, consider an example case where the estimated probabilities of outcome classes are
\[
\widehat{p}(0|x) = 0.27,\quad \widehat{p}(1|x) = 0.54,\quad\mbox{and}\quad\widehat{p}(2|x) = 0.19.
\]
For this study subject, the class with largest probability is $1$, the class with second largest probability is $0$, and the class with third largest (or the smallest) probability is $2$. This means $\widehat{\pi}(0) = 1$, $\widehat{\pi}(1) = 0$, and $\widehat{\pi}(2) = 2$ in~\eqref{eq:ordering-probabilities}. Applying~\eqref{eq:conformal-scores-ordering} yields
\begin{equation}\label{eq:example-conformal-score}
\begin{split}
s(x,\widehat{\pi}(0)) &= s(x, 1) = 1,\\
s(x,\widehat{\pi}(1)) &= s(x, 0) = 0.27 + 0.19 = 0.46,\\
s(x,\widehat{\pi}(2)) &= s(x, 2) = 0.19.
\end{split}
\end{equation}
Note that computing the scores does not require knowledge of the observed outcome. In this example, if the observed outcome is $0$, then the conformal score for the observation $(x, 0)$ is $s(x, 0) = 0.46.$

If any two of the probabilities in~\eqref{eq:ordering-probabilities} are equal, the formula for $s(x, y)$ is more complicated, but  one can always add a small amount of noise to all the probabilities so that no two probabilities are equal. This can be done without changing the predictions or prediction sets. 



\subsection*{Calibration}

With one conformal score in hand for each $D_2$ case, calibration is undertaken by calculating an appropriate quantile. For all $i\in D_2$, and conformal scores $s(X_i, Y_i)$, $\widehat{\gamma}(\alpha)$ is defined by
\[
1 - \widehat{\gamma}(\alpha) ~:=~ \left(1 + \frac{1}{|D_2|}\right)(1 - \alpha)\mbox{th quantile of }1 - s(X_i, Y_i), i\in D_2. 
\]
The best prediction set is \
\begin{equation}\label{eq:nested-conformal-prediction-set}
\begin{split}
\widehat{C}_{\alpha}(x) ~&:=~ \left\{y:\, s(x, y) \ge \widehat{\gamma}(\alpha)\right\} = \left\{y:\,\widehat{p}(y|x) \ge \max_{\gamma < \widehat{\gamma}(\alpha)}\widehat{t}(\gamma, x)\right\}.
\end{split}
\end{equation}
Because $s(x, y)$ is a conformity score indicating how well $(x, y)$ conforms to the $D_2$ classifications, $1 - s(x, y)$ actually is a non-conformity score. The best prediction set $\widehat{C}_{\alpha}(x)$ is collection of all $y$ that conforms the best to the actual $D_2$ outcome classes within the required coverage. This is accomplished by only taking those $y$ for which $1 - s(x, y)$, the non-conformity score, is less than $1 - \widehat{\gamma}(\alpha)$.
The earlier discussion and the nested conformal method for categorical $Y$ is summarized by Algorithm~\ref{alg:nested-conformal-prediction}.
\begin{algorithm}[h]
    \caption{Nested conformal prediction for classification}
    \label{alg:nested-conformal-prediction}
    \SetAlgoLined
    \SetEndCharOfAlgoLine{}
    \KwIn{Data splits $D_1$ and $D_2$, coverage probability $1 - \alpha$.}
    \KwOut{A prediction set $\widehat{C}_{\alpha}(\cdot)$ such that $\mathbb{P}(Y_{\mathrm{f}}\in\widehat{C}(X_{\mathrm{f}})) \ge 1 - \alpha$.}
    Train a classifier $\widehat{p}(\cdot|\cdot)$ on $D_1$. This gives a probability distribution (estimator) for the outcomes from each $x$.\;
    Calculate the conformal scores $s(X_i, Y_i)$ for all $i\in D_2$ defined in~\eqref{eq:nested-conformal-score}.\;
    Compute the $(1 + 1/|D_2|)(1 - \alpha)$-th quantile of $1 - s(X_i, Y_i), i\in D_2$. Call this quantile $1 - \widehat{\gamma}(\alpha)$. This is essentially the $\lceil(|D_2| + 1)(1 - \alpha)\rceil$-th largest value in the sequence $1 - s(X_i, Y_i), i\in D_2$\;
    \Return the prediction set 
    \begin{equation}\label{eq:nested-predict-set-algo}
    \widehat{C}_{\alpha}(x) ~:=~ \left\{y:\,s(x, y)\ge\widehat{\gamma}(\alpha)\right\}.
    \end{equation}
\end{algorithm}
\begin{thm}\label{THM:SPLIT-NESTED-CONFORMAL}
If the data $(X_i, Y_i), i\in D_2$ are exchangeable, then the prediction set $\widehat{C}_{\alpha}(\cdot)$ obtained from Algorithm~\ref{alg:nested-conformal-prediction} satisfies
\[
\mathbb{P}\left(Y_{\mathrm{f}}\in\widehat{C}_{\alpha}(X_{\mathrm{f}})\right) \ge 1 - \alpha.
\]
\end{thm}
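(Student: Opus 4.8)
The plan is to reduce the coverage statement to a one-dimensional rank argument on the conformal scores, exactly as in split conformal prediction. First I would condition on the first split $D_1$. Once $D_1$ is fixed, the trained classifier $\widehat{p}(\cdot\,|\,\cdot)$, and therefore the entire score map $(x,y)\mapsto s(x,y)$ defined in~\eqref{eq:nested-conformal-score}, is a deterministic function that depends neither on $D_2$ nor on the test point. Next I would rewrite the event of interest: by the definition of the returned set in~\eqref{eq:nested-predict-set-algo}, $Y_{\mathrm f}\in\widehat C_\alpha(X_{\mathrm f})$ holds if and only if $s(X_{\mathrm f},Y_{\mathrm f})\ge\widehat\gamma(\alpha)$. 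Writing the non-conformity scores $R_i:=1-s(X_i,Y_i)$ for $i\in D_2$ and $R_{\mathrm f}:=1-s(X_{\mathrm f},Y_{\mathrm f})$, this is the same as $R_{\mathrm f}\le 1-\widehat\gamma(\alpha)$.

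The second step is to transfer exchangeability from the data to the scores. Since $\{(X_i,Y_i):i\in D_2\}\cup\{(X_{\mathrm f},Y_{\mathrm f})\}$ are exchangeable and $s$ is a single fixed measurable map (after conditioning on $D_1$), applying $(x,y)\mapsto 1-s(x,y)$ coordinatewise shows that the real-valued scores $\{R_i:i\in D_2\}\cup\{R_{\mathrm f}\}$ are themselves exchangeable. This is the key structural observation: it converts a statement about an arbitrary joint law of $(X,Y)$ pairs into a statement about $n+1$ exchangeable scalars, where $n=|D_2|$.

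The third step invokes the standard quantile lemma for exchangeable variables. By the definition of $\widehat\gamma(\alpha)$ in Algorithm~\ref{alg:nested-conformal-prediction}, the threshold $1-\widehat\gamma(\alpha)$ is the $(1+1/n)(1-\alpha)$ empirical quantile of $\{R_i:i\in D_2\}$, that is, the $\lceil (n+1)(1-\alpha)\rceil$-th smallest of these $n$ values. For exchangeable $R_1,\dots,R_n,R_{\mathrm f}$ the rank of $R_{\mathrm f}$ among all $n+1$ scores is (sub)uniform on $\{1,\dots,n+1\}$, so $\mathbb{P}(R_{\mathrm f}\le 1-\widehat\gamma(\alpha))\ge \lceil (n+1)(1-\alpha)\rceil/(n+1)\ge 1-\alpha$; this is precisely Lemma~2 of~\cite{romano2019conformalized}, and it also explains why the slightly inflated quantile $(1+1/n)(1-\alpha)$ is used. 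Combining this with the event rewriting from the first step gives the claim conditionally on $D_1$, and since the bound holds for every value of $D_1$ it holds unconditionally by the tower property.

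The step I expect to require the most care is the rank lemma in the presence of ties among the $R_i$. If several scores coincide, the rank of $R_{\mathrm f}$ is only stochastically no larger than uniform, which still yields the one-sided bound above; alternatively, as already noted in the text, one may break ties by adding an arbitrarily small independent perturbation to the estimated probabilities, which alters neither the predictions nor the prediction sets and makes the ranks exactly uniform. Everything else---the reduction of the coverage event and the preservation of exchangeability under a fixed map---is routine once $D_1$ is held fixed.
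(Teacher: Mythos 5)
Your argument is correct and complete, but it is not the route the paper takes: the paper's entire proof is a citation to Proposition~1 of \cite{gupta2019nested}, whereas you supply a self-contained, elementary derivation. Your three steps --- condition on $D_1$ so that $(x,y)\mapsto s(x,y)$ is a fixed measurable map, observe that a fixed map applied coordinatewise to the exchangeable collection $\{(X_i,Y_i):i\in D_2\}\cup\{(X_{\mathrm f},Y_{\mathrm f})\}$ yields $|D_2|+1$ exchangeable scalars, and then apply the (sub)uniform-rank quantile lemma (Lemma~2 of \cite{romano2019conformalized}) together with the tower property --- are exactly the ingredients that underlie the cited proposition, so in substance you are reproving the delegated result rather than inventing a new strategy; what your write-up buys is transparency about where exchangeability enters, why the inflated quantile $(1+1/|D_2|)(1-\alpha)$ is needed, and why ties only help the one-sided bound. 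Two small points worth noting: you tacitly correct the algorithm's phrasing by reading the threshold as the $\lceil(|D_2|+1)(1-\alpha)\rceil$-th \emph{smallest} of the non-conformity scores $1-s(X_i,Y_i)$ (the text says ``largest,'' which would give the wrong tail), and you should dispose of the degenerate case $\lceil(|D_2|+1)(1-\alpha)\rceil>|D_2|$, where the quantile is taken as the maximum (or $+\infty$) and coverage is trivial because the returned set is all of $\{0,1,2\}$. Neither affects the validity of your argument.
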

\begin{proof}
The result follows from Proposition 1 of~\cite{gupta2019nested}.
\end{proof}
\begin{rem}
The approach taken in Algorithm~\ref{alg:nested-conformal-prediction} is called the split conformal prediction method~\citep{papadopoulos2002inductive,lei2014distribution}. There are several, somewhat more involved, versions of conformal methods that do not require sample splitting, and can in principle make better use of the data: jackknife+, CV+, subsampling, and bootstrap. Even a brief discussion of these procedures would take us far afield, and excellent treatments are easily found  in~\cite{barber2019predictive}, \cite{gupta2019nested}, and~\cite{kim2020predictive}. 
\end{rem}
\medskip

\noindent\textbf{Applying  Algorithm~\ref{alg:nested-conformal-prediction} to Probation the Data.} Tables~\ref{tab:conformalscore} and \ref{tab:examples} provide illustrations of the key features of Algorithm 1. Later, connections to the classifier output in Table~\ref{tab:confusion-table} will be addressed.

\begin{table}[htp]
\caption{Examples of conformal scores based on the estimated class probabilities and the actual outcome class. The table shows $s(X_i, Y_i)$ for six random $i$'s in $D_2$.}
\begin{center}
\begin{tabular}{cccccc}
\toprule
Observed $Y$ & $\hat{p}$(NoArrest$|x$) & $\hat{p}$(NonViolent$|x$) & $\hat{p}$(Violent$|x$) &Conformal Score \\ 
\hline \hline 
NoArrest & 0.58 & 0.25 & 0.17 & 1.0 \\
NonViolent & 0.49 & 0.32 & 0.18 & 0.50 \\
NoArrest & 0.47 & 0.35 & 0.18 &1.0 \\
Violent & 0.18 & 0.26 & 0.58 & 1.0 \\
NonViolent & 0.41 & 0.37 & 0.21 & 0.59 \\
NoArrest & 0.27 & 0.54 & 0.19 & 0.46 \\
\hline
\end{tabular}
\end{center}
\label{tab:conformalscore}
\end{table}

From Step 2 in the algorithm, Table~\ref{tab:conformalscore} shows, for a random subset of six $D_2$ cases, the construction of conformal scores as described in Equation~\eqref{eq:conformal-scores-ordering}. The far left column contains the observed outcome class: ``NoArrest'' for no arrest, ``NonViolent'' for an arrest for a nonviolent crime, and ``Violent'' for an arrest for a violent crime. The three columns to the right contain the estimated outcome class probabilities case by case. Conformal scores are provided on the far right. For example, the offender shown at the bottom of the table had no re-arrest while on probation, yet an arrest for a nonviolent crime was forecasted by the classifier because that outcome class had the largest estimated probability. Therefore, the conformal score becomes $0.27 + 0.19 = 0.46$; see Equation~\eqref{eq:example-conformal-score} and the following discussion. 
\begin{figure}[htbp]
\vspace{-0.16in}
\begin{center}
\includegraphics[scale=.50]{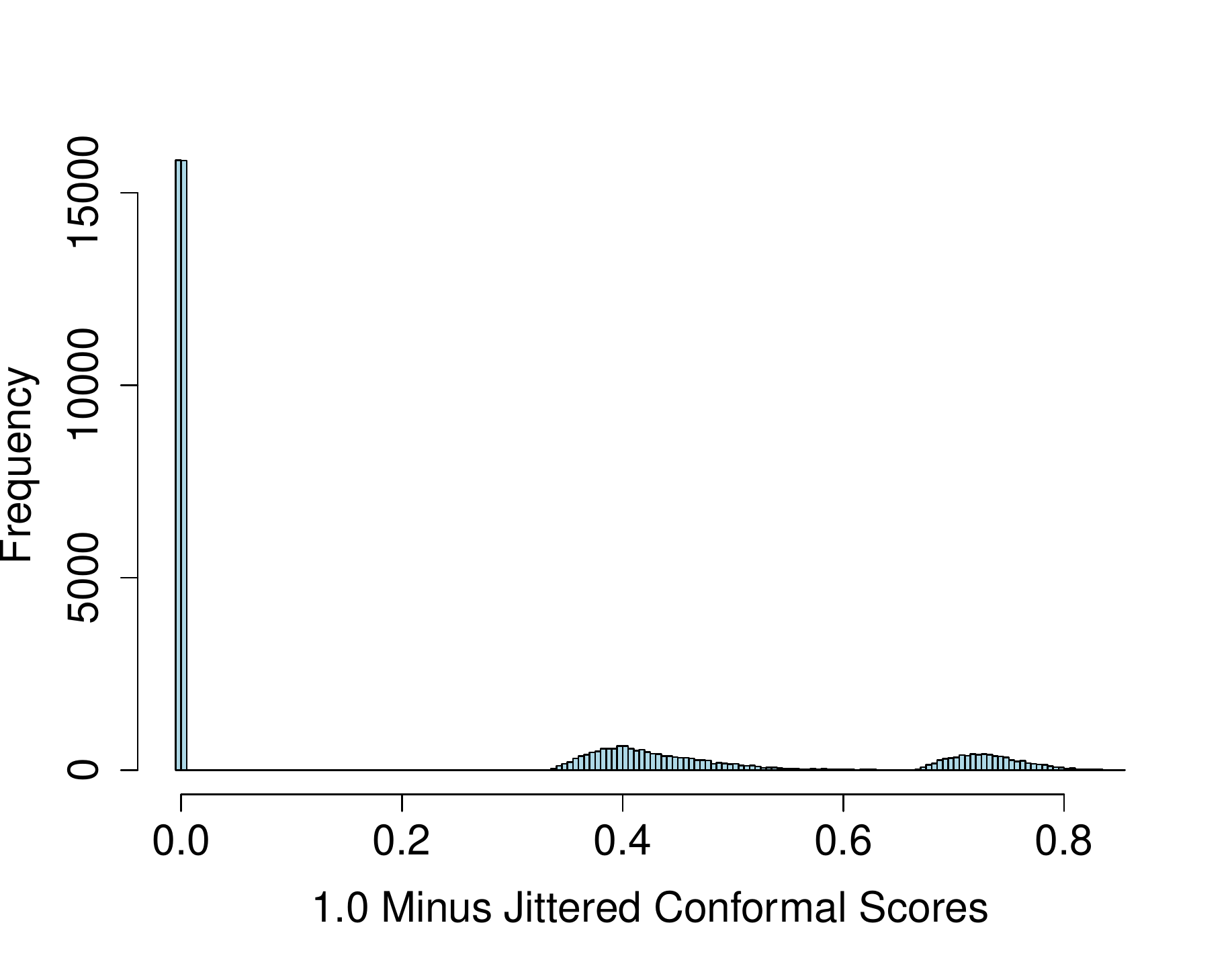}
\caption{Historgram of nested conformal scores from the $D_2$ (test) data.}
\label{fig:conhist}
\end{center}
\end{figure}
Figure~\ref{fig:conhist} is a histogram of the conformal scores $s(X_i, Y_i), i\in D_2$, transformed into non-conformal scores $1 - s(X_i, Y_i), i\in D_2$. There are three distinct of groups of scores. On the far left are the scores for cases in which the class with the highest probability corresponds to the observed outcome class. These are the cases that conform the most closely to the trained classifier and hence, have a non-conformity score of $0$. The scores in the middle are for cases in which the observed outcome class has the second largest fitted probability. The scores to the far right are for cases in which the observed outcome class had the smallest fitted probability. The degree of conformity declines from left to right. The large spike at zero is consistent with the earlier results from the confusion table from the stochastic gradient boosting classifier and are also expected because the classifier should do better than chance if there is a signal in the data.

\begin{table}[htb]
\centering
\caption{Nested conformal prediction sets for four examples when $\alpha = 0.30$ and $\alpha = 0.05$. We use the designation $0 = \mbox{No Arrest},$ $1 = \mbox{NonViolent Arrest}$, and $2 = \mbox{Violent Arrest}$. The first column is the probabilities from the classifier, the second column is the conformal scores computed based on~\eqref{eq:ordering-probabilities}. The third and fourth columns show the resulting prediction sets from Algorithm~\ref{alg:nested-conformal-prediction} when $\alpha = 0.30$ and when $\alpha = 0.05$. When $\alpha = 0.30$, the calibrated $\widehat{\gamma}(\alpha)$ in step 3 of Algorithm~\ref{alg:nested-conformal-prediction} is $0.6$ and when $\alpha = 0.05$, $\widehat{\gamma}(\alpha) = 0.26$.}
\label{tab:examples}
\resizebox{\textwidth}{!}{
\begin{tabular}{cccc}
\toprule
\begin{tabular}[c]{@{}c@{}}Probabilities\\ $\widehat{p}(0|x), \widehat{p}(1|x), \widehat{p}(2|x)$\end{tabular}      & \begin{tabular}[c]{@{}c@{}}Conformal Scores\\ $s(x,0), s(x, 1), s(x, 2)$\end{tabular}            & \begin{tabular}[c]{@{}c@{}}Prediction set\\$\alpha = 0.3$, $\widehat{\gamma}(\alpha) = 0.6$\end{tabular} & \begin{tabular}[c]{@{}c@{}}Prediction set\\$\alpha = 0.05$, $\widehat{\gamma}(\alpha) = 0.26$\end{tabular} \\ \hline\hline
(0.34,\, 0.27,\, 0.38) & (0.61,\, 0.27,\, 1.00) & $\{2, 0\}$                                                                                             & $\{2, 0, 1\}$                                                                                   \\
(0.19,\, 0.24,\, 0.56) & (0.19,\, 0.44,\, 1.00) & $\{2\}$                                                                                                        & $\{2, 1\}$                                                                                              \\
(0.60,\, 0.24,\, 0.15) & (1.00,\, 0.40,\, 0.15) & $\{0\}$                                                                                                      & $\{0, 1\}$                                                                                             \\ 
(0.30,\, 0.35,\, 0.34) & (0.30,\, 1.00,\, 0.65) & $\{1, 2\}$                                                                                            & $\{1, 2, 0\}$                                                                                    \\ \hline
\end{tabular}}
\end{table}

Algorithm steps 3 and 4 are illustrated in Table~\ref{tab:examples} using a random set of four $D_2$ cases as if the true outcome were unknown. 

With $\alpha=.05$, a high probability coverage of 0.95 is being sought. The $\hat{\gamma} (\alpha)$ of 0.26 is relatively small making it more likely that a larger number of outcome classes are included in a prediction set. The column on the far right shows two prediction sets with two elements and two prediction sets with three elements. For $\alpha =.30$ to the immediate left, a moderate coverage probability of $0.70$ is preferred. The $\hat{\gamma}(\alpha) = 0.60$, which is a more demanding criterion for precision. There are two prediction sets with two elements and two prediction sets with a single element. A larger value of $\alpha$ favors smaller prediction sets with smaller probability guarantees. One has greater precision but less certainty.

Because the value of $\alpha$ is chosen by the data analyst, it offers some control over the properties of the forecasts. Should prediction sets with fewer elements be preferred at the cost of smaller coverage probabilities? For criminal justice applications such as ours, the answer will come from stakeholders. For example, stakeholders might be especially concerned about offenders who are forecasted to commit violence crimes. Nonviolent crimes might be viewed primarily as a public nuisance, undesirable to be sure, but not worth increasing incarceration rates. A larger value of $\alpha$ might then follow. Yet, some stakeholders may be uncomfortable with greater uncertainty. For them, a smaller value of $\alpha$ might be preferred even if the prediction sets are larger. Such tradeoffs must be addressed and resolved before a risk algorithm is deployed.

These issues apply equally to forecasts with a conditional guarantee~\eqref{eq:conditional-guarantee}.
In practice, conditional forecasts can be desirable because probation decisions are made one individual at a time. For instance, if re-arrests differ on the average between male and female offenders, conditional forecasts offer a more responsive, and arguably more fair, approach. But the coverage guarantees now are only asymptotic. 
\medskip

\noindent\textbf{Comparison of Conformal Prediction and Confusion table.} In previous sections, we have presented two different methods of understanding forecasts. With the confusion table (Table~\ref{tab:confusion-table}), we obtain estimated probabilities for when the true outcome matches the best forecast from the classifier (i.e., the outcome with the largest outcome probability). But the information we obtain from conformal inference conveys more complete information. 

First, our unconditional estimates are valid in finite samples. They also are agnostic to the choice of classifier. Such flexibility can broaden usefully the range of appropriate applications.

Second, our estimates can offer forecasts adapted to different subgroups defined by their predictor values. Stakeholders often note that there is considerable heterogeneity in the backgrounds of convicted offenders. One size should not be permitted fit all. 


Nevertheless, there can be arguments for relying solely on confusion tables constructed with test data. They provide probability estimates for subgroup of subjects determined by their forecasted class. Here, that would be ``NoArrest'', or ``NoViolent'', or ``Violent.'' For example, from Table~\ref{tab:confusion-table}, we know that if the forecast is ``NoArrest,'' the estimated the probability that ``NoArrest'' is the true outcome is $0.81$ (i.e., $1 - 0.19$). One might argue that such a prediction set with a single outcome is sound when $\alpha = 0.20$ and the forecast is ``NoArrest.''
Moreover, because the nested conformal prediction set in Algorithm~\eqref{alg:nested-conformal-prediction} does not condition on a single forecasted outcome class, useful information is lost, and conformal methods obtain the same threshold $\widehat{\gamma}(\alpha)$ for all subjects, regardless of their forecast. 

These distinguishing features of the confusion table and nested conformal prediction can sometimes lead to conflicting conclusions. For example, should stakeholders favor $\alpha = 0.2$, then for a subject with a forecast of ``NoArrest,'' a confusion table might convey that the truth is ``NoArrest'' at the required probability, but the nested conformal prediction set could report a prediction set with more than one outcome for the same required probability. 

A way to avoid such discrepancies is to estimate nested conformal prediction sets separately on groups, each defined by their forecast. We turn, therefore, to \emph{local} nested conformal prediction regions in the following section.

\section{Localized Conformal Prediction}\label{sec:local-conformal-prediction}
Should there be a desire to match better forecasting conventions used by classifiers, the test data can be separated into subsets, each determined by the outcome class with the largest probability. Such subsets are the same as those in each column of a conventional confusion table, but the nested conformal procedure is applied to produce prediction sets. For the probation data, one would apply the nested conformal procedure separately three times: on the subset of test data determined by the classifier's ``NoArrest'' forecast, on the subset of test data determined the classifier's ``NonViolent'' forecast, and on the subset of test determined by the classifier's ``Violent'' forecast.

We call this method ``localized conformal prediction,'' summarized in Figure~\ref{fig:localized_conformal_prediction} for the probation data.
\begin{figure}[!h]
    \centering
    \includegraphics[width=\textwidth]{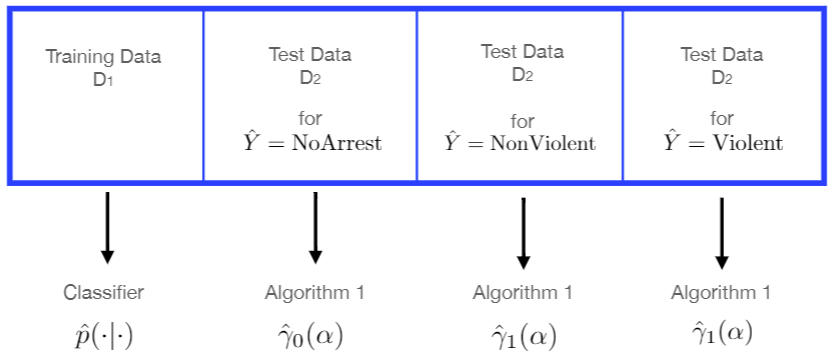}
    \caption{Illustration of Localized Conformal Prediction With Each Test Data Subset Determined by the Classifier's Forecast}
    \label{fig:localized_conformal_prediction}
\end{figure}
Algorithm~\ref{alg:nested-conformal-prediction} is applied three times, once to each test data subset, to produce three thresholds $\widehat{\gamma}_0(\alpha), \widehat{\gamma}_1(\alpha),$ and $\widehat{\gamma}_2(\alpha).$ The three potential prediction sets denoted by $j$ are,
\[
\widehat{C}_{j,\alpha}(x) = \{y\in\{0,1,2\}:\, s(x, y) \ge \widehat{\gamma}_j(\alpha)\},\quad\mbox{for}\quad j \in \{0, 1, 2\}.
\]

For any future subject with the covariate vector $x$, we first use the classifier from the training data to obtain the forecast $\widehat{Y}$. If $\widehat{Y} = 0$, one should report the prediction set $\widehat{C}_{0,\alpha}(x)$.\footnote{Here again we use the designation: ``0 = NoArrest'', ``1 = NonViolent'', and ``2 = Violent''.} If $\widehat{Y} = 1$, one should report the prediction set $\widehat{C}_{1,\alpha}(x)$. If $\widehat{Y} = 2$, one should report the prediction set $\widehat{C}_{2,\alpha}(x)$. That is, the appropriate prediction set is $\widehat{C}_{\widehat{Y},\alpha}(x)$.

The prediction set constructed in this manner is more selective than the output of Algorithm~\ref{alg:nested-conformal-prediction} applied on the complete test data $D_2$. Nevertheless, the prediction set $\widehat{C}_{\widehat{Y},\alpha}(x)$ retains finite sample unconditional guarantee~\eqref{eq:unconditional-guarantee}, irrespective of the accuracy of the classifier $\widehat{p}(\cdot|\cdot)$. 
\begin{thm}
If the data $(X_i, Y_i), i\in D_2$ are exchangeable, then the prediction set $\widehat{C}_{\widehat{Y},\alpha}(\cdot)$ obtained consistent with Figure~\ref{fig:localized_conformal_prediction} satisfies
\[
\mathbb{P}\left(Y_{\mathrm{f}} \in \widehat{C}_{\widehat{Y}_{\mathrm{f}},\alpha}(X_{\mathrm{f}})\right) \ge 1 - \alpha.
\]
\end{thm}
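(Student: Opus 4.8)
The plan is to reduce the claim to Theorem~\ref{THM:SPLIT-NESTED-CONFORMAL} applied separately within each forecast group, and then recombine the group-wise guarantees via the law of total probability. The crucial structural fact is that the classifier $\widehat{p}(\cdot\,|\,\cdot)$, and hence the forecast map $x \mapsto \widehat{Y}(x)$, is trained only on $D_1$ and is therefore a \emph{fixed} deterministic function once we condition on $D_1$; it does not depend on the test data $D_2$ or on the future point $(X_{\mathrm{f}}, Y_{\mathrm{f}})$ (which I take to be exchangeable with $D_2$, as in Theorem~\ref{THM:SPLIT-NESTED-CONFORMAL}). This is exactly what makes the data-dependent split of $D_2$ by forecast harmless.

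First I would condition throughout on $D_1$, so that $g(x) := \widehat{Y}(x)$ is a fixed function and the conformal scores $s(x,y)$ are fixed functions of $(x,y)$. Writing $A_j := \{\widehat{Y}_{\mathrm{f}} = j\} = \{g(X_{\mathrm{f}}) = j\}$ for $j \in \{0,1,2\}$, the events $A_0, A_1, A_2$ partition the sample space, and on $A_j$ the reported set is $\widehat{C}_{j,\alpha}(X_{\mathrm{f}})$. By the law of total probability,
\[
\mathbb{P}\left(Y_{\mathrm{f}} \in \widehat{C}_{\widehat{Y}_{\mathrm{f}},\alpha}(X_{\mathrm{f}})\right) = \sum_{j=0}^{2} \mathbb{P}(A_j)\,\mathbb{P}\left(Y_{\mathrm{f}} \in \widehat{C}_{j,\alpha}(X_{\mathrm{f}}) \,\big|\, A_j\right),
\]
so it suffices to establish the conditional guarantee $\mathbb{P}(Y_{\mathrm{f}} \in \widehat{C}_{j,\alpha}(X_{\mathrm{f}}) \mid A_j) \ge 1 - \alpha$ for each $j$; summing against $\sum_j \mathbb{P}(A_j) = 1$ then yields the result.

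The core step is a group-exchangeability argument. Consider the enlarged collection $(X_i, Y_i)$ for $i \in D_2 \cup \{\mathrm{f}\}$, which is exchangeable by hypothesis. Because membership in group $j$ is governed solely by $g(X_i) = j$ with $g$ fixed, I would condition on the unordered multiset of those points that land in group $j$ (including the future point when it does). By exchangeability of the full collection, conditional on this multiset together with $A_j$, the future point is uniformly distributed over the group-$j$ points, so the group-$j$ non-conformity scores $\{1 - s(X_i, Y_i)\}$ --- the $|S_j|$ calibration scores from $S_j$ together with the future score $1 - s(X_{\mathrm{f}}, Y_{\mathrm{f}})$ --- are exchangeable. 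At this point the situation inside group $j$ is precisely the split-nested setup of Theorem~\ref{THM:SPLIT-NESTED-CONFORMAL}: $\widehat{\gamma}_j(\alpha)$ is obtained by applying Algorithm~\ref{alg:nested-conformal-prediction} to the $|S_j|$ calibration scores with the finite-sample correction $(1 + 1/|S_j|)(1-\alpha)$, so the standard quantile lemma (Lemma 2 of~\cite{romano2019conformalized}, equivalently Proposition 1 of~\cite{gupta2019nested}) gives $\mathbb{P}(1 - s(X_{\mathrm{f}}, Y_{\mathrm{f}}) \le 1 - \widehat{\gamma}_j(\alpha) \mid A_j) \ge 1 - \alpha$, i.e. $\mathbb{P}(Y_{\mathrm{f}} \in \widehat{C}_{j,\alpha}(X_{\mathrm{f}}) \mid A_j) \ge 1 - \alpha$. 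Since this bound holds conditionally on every realization of the group-$j$ multiset, it holds after integrating out that multiset, completing the conditional guarantee.

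The step I expect to be the main obstacle is the group-exchangeability claim --- verifying that conditioning on the forecast value $\widehat{Y}_{\mathrm{f}} = j$ does not destroy the exchangeability needed to invoke the quantile lemma. The resolution hinges entirely on sample splitting: since $g$ depends only on $D_1$, the selection event $\{g(X_i) = j\}$ is a symmetric function of the individual points and commutes with permutations of $D_2 \cup \{\mathrm{f}\}$, so conditioning on which points fall in group $j$ leaves those points exchangeable among themselves. I would also dispatch the harmless degenerate case: if no calibration point has forecast $j$ (i.e. $S_j = \emptyset$), one adopts the convention $\widehat{C}_{j,\alpha}(x) = \{0,1,2\}$, whose coverage is one, so the bound is preserved.
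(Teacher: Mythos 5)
Your proof is correct and takes essentially the same route as the paper: the paper's ``proof'' is only a pointer to Lemma 1 of \cite{romano2019malice} and Section 4 of \cite{vovk2012conditional}, and the argument you spell out --- conditioning on $D_1$ so the forecast map is a fixed function, establishing exchangeability of the scores within each forecast group, invoking the quantile lemma with the $(1+1/|S_j|)$ correction group by group, and recombining via the law of total probability --- is precisely the Mondrian/group-conditional argument those references contain. Your handling of the random group sizes and the empty-group convention fills in the details the paper leaves implicit; there are no gaps.
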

This result follows from the arguments of Lemma 1 of~\cite{romano2019malice}; also, see Section 4 of~\cite{vovk2012conditional}. The localized conformal prediction algorithm is one of the many variants of Algorithm~\ref{alg:nested-conformal-prediction} leading to finite sample prediction sets that can be more responsive to particular applications. Such variants can be found in~\cite{hechtlinger2018cautious},~\cite{sadinle2019least}, and~\cite{guan2019prediction}. We introduce the localized formulation because the localized data structure corresponds well to the usual inferential framework for a confusion table. 

\medskip

\noindent\textbf{Application of Localized Conformal Method to Probation Data.} To illustrate localized conformal method, we consider its performance compared to the conventional confusion table approach. The probation data serve as as a testbed, and there are implications for future practice.

Working with a usual confusion table, suppose a data analyst constructs prediction sets containing solely the high probability forecast. This follows from the test data subsetting described above.  Forecasting errors~\eqref{eq:forecasting-error} provide the miscoverage probabilities conditional on the forecast. For instance, from the last row of Table~\ref{tab:confusion-table}, when the forecast is no arrest, the true outcome is no arrest with an estimated probability of $1 - 0.19 = 0.81$. Consequently, reporting a single element prediction set would lead to a $81\%$ coverage when the true forecast is no arrest. But for the cases with a forecast of non-violent arrest, the single element prediction set has a coverage of $1-0.47=0.53$. Single element prediction set can have different coverage across different cases. In summary, confusion tables fix the number of elements in the prediction set to one and lead to different coverage levels. 

Conformal prediction methods, in contrast, fix coverage at a desired level of $1 - \alpha$ and find a set of one or more outcome classes such that coverage is at least $1 - \alpha$. The nested conformal prediction method (described in Algorithm~\ref{alg:nested-conformal-prediction}) does not distinguish between cases depending on their forecast. The localized conformal prediction method (illustrated in Figure~\ref{fig:localized_conformal_prediction}) distinguishes between cases depending on their forecast, similar to the confusion table. This is the key distinction between these two conformal methods. 
\newcolumntype{g}{>{\columncolor{Gray}}c}
\begin{table}[htb]
\centering
\caption{Comparison of nested conformal and localized conformal prediction sets for the four examples in Table~\ref{tab:examples} when $\alpha = 0.30$ and $\alpha = 0.05$. We use the designation $0 = \mbox{No Arrest},$ $1 = \mbox{NonViolent Arrest}$, and $2 = \mbox{Violent Arrest}$. The first and third columns show the nested conformal prediction sets from Algorithm~\ref{alg:nested-conformal-prediction} when $\alpha = 0.30$ and when $\alpha = 0.05$; these are repeated from Table~\ref{tab:examples}. The second and fourth columns show the localized conformal prediction sets from Figure~\ref{fig:localized_conformal_prediction} when $\alpha = 0.30$ and $\alpha = 0.05$, respectively. The threshold $\widehat{\gamma}(\alpha)$ for nested conformal are mentioned in Table~\ref{tab:examples}. There are three thresholds for each $\alpha$ for localized conformal. For $\alpha = 0.3$, $\widehat{\gamma}_0(\alpha)=0.99, \widehat{\gamma}_1(\alpha)=0.58, \widehat{\gamma}_2(\alpha)=0.32$. For $\alpha=0.05$, $\widehat{\gamma}_0(\alpha)=0.45, \widehat{\gamma}_1(\alpha)=0.26, \widehat{\gamma}_2(\alpha)=0.24.$}
\label{tab:localized-conformal-examples}
\resizebox{\textwidth}{!}{
\begin{tabular}{cgcg}
\hline
\begin{tabular}[c]{@{}c@{}}Prediction set\\ when $\alpha = 0.3$\end{tabular} & \begin{tabular}[c]{@{}c@{}}Localized Prediction\\set when $\alpha = 0.3$\end{tabular} & \begin{tabular}[c]{@{}c@{}}Prediction set\\ when $\alpha = 0.05$\end{tabular} & \begin{tabular}[c]{@{}c@{}}Localized Prediction\\set when $\alpha = 0.05$\end{tabular} \\ \hline\hline
\{2,0\}                                                                                                        & \{2,0\}                                                                                & \{2, 0, 1\}                                                                                                      & \{2, 0, 1\}                                                                             \\ \hline
\{2\}                                                                                                          & \{2,1\}                                                                                & \{2, 1\}                                                                                                         & \{2, 1\}                                                                                \\ \hline
\{0\}                                                                                                          & \{0\}                                                                                  & \{0, 1\}                                                                                                         & \{0\}                                                                                   \\ \hline
\{1, 2\}                                                                                                       & \{1, 2\}                                                                               & \{1, 2, 0\}                                                                                                      & \{1, 2, 0\}                                                                             \\ \hline
\end{tabular}}
\end{table}
In Table~\ref{tab:localized-conformal-examples}, we compare the nested conformal and localized conformal prediction sets for the four example cases shown in Table~\ref{tab:examples}. At a lower level of confidence ($0.7 = 1 - 0.3$), conformal and localized prediction sets are no different, for the four examples presented. At a higher level of confidence ($0.95 = 1 - 0.05$), localized prediction sets are smaller than conformal prediction sets, when the forecast is ``No Arrest'' (or $0$). Precision differs in this instance, and in the probation policy setting, a prediction set with a single element can substantially simplify subsequent decisions. 
\begin{table}[!h]
\caption{Proportions of obtained prediction sets of particular sizes for nested and localized conformal method when $\alpha = 0.3$ and $\alpha = 0.05$. For both nested and localized conformal methods, we computed the prediction sets for 51,277 test data points when $\alpha = 0.3$ and $\alpha = 0.05$. When $\widehat{Y} = 0$, both methods can only report prediction sets $\{0\}, \{0, 1\}, \{0, 2\}$ and $\{0, 1, 2\}$. Similarly, for $\widehat{Y} = 1$, the possible prediction sets are $\{1\}, \{1, 0\}, \{1, 2\}$ and $\{1, 0, 2\}$. For each forecast and each method, we report the proportion of cases for which the obtained prediction set has one element (denoted by ``One''), two elements (denoted by ``Two''), and three elements (denoted by ``Three''). Singletons (sets with one outcome) are the most informative, two element set are the next most informative, and finally, three element sets are the least informative. The proportions are rounded-off to three digits for clarity.}
\label{tab:proportion-of-sets-2}
\begin{tabular}{ccccggg}
\hline
                  & \multicolumn{3}{c}{\begin{tabular}[c]{@{}c@{}}Nested Conformal\\ ($\alpha = 0.3$)\end{tabular}}  & \multicolumn{3}{c}{\begin{tabular}[c]{@{}c@{}}Localized Conformal\\ ($\alpha = 0.3$)\end{tabular}}  \\ \hline
                  & One                             & Two                            & Three                         & One                              & Two                             & Three                          \\ \toprule
$\widehat{Y} = 0$ & $0.835$                         & $0.165$                        & $0.000$                       & $1.000$                          & $0.000$                         & $0.000$                        \\
$\widehat{Y} = 1$ & $0.720$                         & $0.280$                        & $0.000$                       & $0.508$                          & $0.492$                         & $0.000$                        \\
$\widehat{Y} = 2$ & $0.624$                         & $0.376$                        & $0.000$                       & $0.000$                          & $0.982$                         & $0.018$                        \\ \hline
                  & \multicolumn{3}{c}{\begin{tabular}[c]{@{}c@{}}Nested Conformal\\ ($\alpha = 0.05$)\end{tabular}} & \multicolumn{3}{c}{\begin{tabular}[c]{@{}c@{}}Localized Conformal\\ ($\alpha = 0.05$)\end{tabular}} \\ \hline
                  & One                             & Two                            & Three                         & One                              & Two                             & Three                          \\ \toprule
$\widehat{Y} = 0$ & $0.000$                         & $0.798$                        & $0.202$                       & $0.200$                          & $0.800$                         & $0.000$                        \\
$\widehat{Y} = 1$ & $0.000$                         & $0.627$                        & $0.373$                       & $0.000$                          & $0.602$                         & $0.398$                        \\
$\widehat{Y} = 2$ & $0.000$                         & $0.510$                        & $0.490$                       & $0.0$                            & $0.274$                         & $0.726$                         \\ \hline               
\end{tabular}
\end{table}

Informally, the localized conformal method accounts for the heterogeneity of outcomes in the data and can yield greater precision than the nested conformal method for some outcome classes. 

A more extensive comparison is provided in Table~\ref{tab:proportion-of-sets-2}. In this table, we report the proportion of cases (out of each forecasted outcome) for which the conformal and localized methods lead to a prediction set with one element, two elements, or three elements. For instance, the first row of Table~\ref{tab:proportion-of-sets-2} shows the proportion of test subjects for whom the highest probability forecast is $0$ (= ``No Arrest''). So, the first number $0.835$ is the proportion of subjects for whom the nested conformal method returned a single element prediction set among the subjects for whom the highest probability forecast is $0$ (= ``No Arrest''). Prediction sets with a single member are desirable, especially in policy settings when decisions need to be made. 

There is a lot to consider in the table. First, there is the earlier tradeoff between $1-\alpha$ and precision. Comparing the top panel to the bottom panel, prediction sets with a single element are far more common in the top panel for which $1-\alpha = 0.70$. Prediction sets with three elements are far more common in the bottom panel for which $1-\alpha = 0.95$. These patterns, which dominate Table~\ref{tab:proportion-of-sets-2}, make it difficult to isolate the impact of the conformal method (i.e., the columns).

Within either the top or bottom panel, there are not for this analysis consistent differences in precision between the two conformal methods. Sometimes the nested conformal approach is more precise, and sometimes the localized conformal approach is more precise. For forecasts of no re-arrest, localized conformal inference performs better. For example, in the top panel, the localized method always arrives at prediction sets with a single element. But for the two less common outcomes, the nested conformal approach arguably is superior, particularly for a violent crime arrest, the least most common outcome classes.

These differences in precision can be linked to the difference in the guarantees provided by the nested and localized conformal methods. Recall from Figure~\ref{fig:localized_conformal_prediction} that the localized conformal approach applies the nested conformal algorithm separately at the same  $1-\alpha$ level to three subsets of test data, each corresponding to the outcome forecasted. Concomitantly, the localized conformal approach provides a coverage guarantee of at least $1-\alpha$ for each of these subsets, despite far smaller numbers of observations for the two arrest outcome classes from which prediction sets are derived. Furthermore, the stakeholder cost ratios in this analysis introduce more error into forecasts for both of the two arrest outcomes, especially for violent crime arrests. The cost ratios incentivize the risk algorithm to especially avoid misclassifying offenders who are at high risk for re-arrest. One result is that for these outcome classes, there are many more false positives, which introduce greater forecasting error. The localized conformal approach is undermined by these false positives when working with one subset of the data at a time. The nested approach, which uses all the data at once, is not disadvantaged in this manner. The difference of guaranteeing coverage on each subsets versus on all of test data makes it harder to directly compare localized and nested conformal methods; with a stronger guarantee of localized conformal method comes a larger prediction set.

\section{Conclusions}\label{sec:conclusions}

Aggregate performance evaluations made from the earlier confusion table constructed from test data, made a case that the algorithmic risk assessments had better accuracy than the means by which probationer risk levels had previously been determined. If supervision intensity were substantially informed by the algorithmic forecasts instead of current practice, many forecasting errors could be eliminated. Those errors have consequences for supervising offenders with the least restrictive means consistent with public safety and crime control. But this is old news addressed in \cite{barnes2010low, berk2010second}. 

The new news is that conformal prediction sets have the capacity to improve the correspondence between risk forecasts and needs of stakeholders.  Confusion tables constructed from test data and their summary statistics can play an essential role in evaluating the overall performance of algorithmic risk procedures. Classification error rates, for example, are one important metric. But forecasts for individuals that inform particular decisions are better represented as prediction sets than the single most likely outcome. There will often be times when two or more outcome classes have nearly the same outcome probabilities. Forcing a single choice, even if desirable from a policy perspective, can misrepresent the information provided. In contrast, when a prediction set includes a single element, the classification algorithm has been able to make a relatively definitive and defensible assessment of risk.

Conformal prediction sets come with additional desirable features. There is a forecasted prediction set and the probability that the true outcome class is an element in that prediction set is controlled as the user specifies. One has information on both precision and uncertainty. Also, data analysts can make tradeoffs between precision and uncertainty allowing algorithmic risk assessments to be more responsive to the policy setting. 

Nested conformal prediction sets address precision and uncertainty with finite sample guarantees, but conditional forecasts also are available with asymptotic guarantees. Conditional forecasts can be especially useful is criminal justice risk assessments because the forecasts can apply to similarly situated offenders. Finally, if as a policy matter a refined forecast is needed, localized conformal prediction sets can be constructed that condition on the forecasted outcome class. Localization requires that the confusion tables and conformal prediction sets work with the same subsets of cases. 

In practice, all algorithmic risk results are only as good as the exchangeable data available and the forecasting skill of the algorithm. Criminal justice administrative data that are routinely accessible will often suffice, although their quality and depth can vary by jurisdiction and the criminal justice decision to be made. For example, if an offender has been incarcerated in a state prison, the available data may be especially rich. The same generally does not apply to incarceration in a county jail. But whatever the data may be, credible arguments must be made to justify exchangeability. 

Risk algorithms vary widely in feasibility, accuracy, speed, and ease of use. A discussion is beyond the scope of this paper. Suffice it to say that there can be challenging tradeoffs and that proper tuning is essential. \cite{hastie2009elements} provide an instructive discussion. 

It remains to be seen whether conformal prediction methods can be accepted by criminal justice decision makers and stakeholders, should there be an opportunity to improve on current parole risk procedures. Refitting the classifier with current data would surely be a major improvement. More emphasis on capturing and accurately communicating uncertainty also would be very useful. However, there is considerable resistance in general to algorithmic risk forecasting as currently practiced or proposed, fueled significantly by misunderstandings about the procedures. Fairness and transparency matter too. These complications are locally salient and will shape prospects for any revisions of the parole risk methods. At the very least, there is important educational work ahead.  

\bibliographystyle{apalike}
\bibliography{conformal}

\begin{thebibliography}{}

\bibitem[Ahlman and Kurtz, 2008]{ahlman2008appd}
Ahlman, L.~C. and Kurtz, E.~M. (2008).
\newblock The appd randomized controlled trial in low risk supervision.
\newblock Technical report, Internal report. Philadelphia, PA: First Judicial
  District of Pennsylvania.

\bibitem[Alemany et~al., 2019]{alemany2019predicting}
Alemany, S., Beltran, J., Perez, A., and Ganzfried, S. (2019).
\newblock Predicting hurricane trajectories using a recurrent neural network.
\newblock In {\em Proceedings of the AAAI Conference on Artificial
  Intelligence}, volume~33, pages 468--475.

\bibitem[Angelopoulos et~al., 2020]{angelopoulos2020uncertainty}
Angelopoulos, A., Bates, S., Malik, J., and Jordan, M.~I. (2020).
\newblock Uncertainty sets for image classifiers using conformal prediction.
\newblock {\em arXiv:2009.14193}.

\bibitem[Barber et~al., 2019a]{foygel2019limits}
Barber, R.~F., Cand{\`e}s, E.~J., Ramdas, A., and Tibshirani, R.~J. (2019a).
\newblock The limits of distribution-free conditional predictive inference.
\newblock {\em Information and Inference: A Journal of the IMA}.

\bibitem[Barber et~al., 2019b]{barber2019predictive}
Barber, R.~F., Candes, E.~J., Ramdas, A., and Tibshirani, R.~J. (2019b).
\newblock Predictive inference with the jackknife+.
\newblock {\em arXiv:1905.02928. To appear in Annals of Statistics}.

\bibitem[Barnes et~al., 2010]{barnes2010low}
Barnes, G.~C., Ahlman, L., Gill, C., Sherman, L.~W., Kurtz, E., and Malvestuto,
  R. (2010).
\newblock Low-intensity community supervision for low-risk offenders: a
  randomized, controlled trial.
\newblock {\em Journal of Experimental Criminology}, 6(2):159--189.

\bibitem[Berk, 2009]{berk2009role}
Berk, R. (2009).
\newblock The role of race in forecasts of violent crime.
\newblock {\em Race and social problems}, 1(4):231.

\bibitem[Berk, 2019]{berk2019machine}
Berk, R. (2019).
\newblock {\em Machine learning risk assessments in criminal justice settings}.
\newblock Springer.

\bibitem[Berk et~al., 2010]{berk2010second}
Berk, R., Barnes, G., Ahlman, L., and Kurtz, E. (2010).
\newblock When second best is good enough: A comparison between a true
  experiment and a regression discontinuity quasi-experiment.
\newblock {\em Journal of Experimental Criminology}, 6(2):191--208.

\bibitem[Berk et~al., 2020]{berk2020using}
Berk, R., Olson, M., Buja, A., and Ouss, A. (2020).
\newblock Using recursive partitioning to find and estimate heterogenous
  treatment effects in randomized clinical trials.
\newblock {\em Journal of Experimental Criminology}, pages 1--20.

\bibitem[Berk and Kuchibhotla, 2020]{berk2020improving}
Berk, R.~A. and Kuchibhotla, A.~K. (2020).
\newblock Improving fairness in criminal justice algorithmic risk assessments
  using conformal prediction sets.
\newblock {\em arXiv:2008.11664}.

\bibitem[Breiman, 2001]{breiman2001random}
Breiman, L. (2001).
\newblock Random forests.
\newblock {\em Machine learning}, 45(1):5--32.

\bibitem[Burgess, 1928]{burgess1928factors}
Burgess, E.~W. (1928).
\newblock Factors determining success or failure on parole.
\newblock {\em The workings of the indeterminate sentence law and the parole
  system in Illinois}, pages 221--234.

\bibitem[Carlson, 2017]{carlson2017need}
Carlson, A.~M. (2017).
\newblock The need for transparency in the age of predictive sentencing
  algorithms.
\newblock {\em Iowa L. Rev.}, 103:303.

\bibitem[Chen et~al., 2020]{chen2020predicting}
Chen, S., Guo, Z., and Zhao, X. (2020).
\newblock Predicting mortgage early delinquency with machine learning methods.
\newblock {\em European Journal of Operational Research}.

\bibitem[Chen and Guestrin, 2016]{chen2016xgboost}
Chen, T. and Guestrin, C. (2016).
\newblock Xgboost: A scalable tree boosting system.
\newblock In {\em Proceedings of the 22nd acm sigkdd international conference
  on knowledge discovery and data mining}, pages 785--794.

\bibitem[Coglianese and Lehr, 2019]{coglianese2019transparency}
Coglianese, C. and Lehr, D. (2019).
\newblock Transparency and algorithmic governance.
\newblock {\em Admin. L. Rev.}, 71:1.

\bibitem[Guan and Tibshirani, 2019]{guan2019prediction}
Guan, L. and Tibshirani, R. (2019).
\newblock Prediction and outlier detection in classification problems.
\newblock {\em arXiv:1905.04396}.

\bibitem[Gupta et~al., 2019]{gupta2019nested}
Gupta, C., Kuchibhotla, A.~K., and Ramdas, A.~K. (2019).
\newblock Nested conformal prediction and quantile out-of-bag ensemble methods.
\newblock {\em arXiv:1910.10562}.

\bibitem[Hastie et~al., 2009]{hastie2009elements}
Hastie, T., Tibshirani, R., and Friedman, J. (2009).
\newblock {\em The elements of statistical learning: data mining, inference,
  and prediction}.
\newblock Springer Science \&amp; Business Media.

\bibitem[Hechtlinger et~al., 2018]{hechtlinger2018cautious}
Hechtlinger, Y., P{\'o}czos, B., and Wasserman, L. (2018).
\newblock Cautious deep learning.
\newblock {\em arXiv:1805.09460}.

\bibitem[Huq, 2018]{huq2018racial}
Huq, A.~Z. (2018).
\newblock Racial equity in algorithmic criminal justice.
\newblock {\em Duke LJ}, 68:1043.

\bibitem[Izbicki et~al., 2019]{izbicki2019flexible}
Izbicki, R., Shimizu, G.~T., and Stern, R.~B. (2019).
\newblock Flexible distribution-free conditional predictive bands using density
  estimators.
\newblock {\em arXiv:1910.05575}.

\bibitem[Kearns and Roth, 2019]{kearns2019ethical}
Kearns, M. and Roth, A. (2019).
\newblock {\em The ethical algorithm: The science of socially aware algorithm
  design}.
\newblock Oxford University Press.

\bibitem[Kim et~al., 2020]{kim2020predictive}
Kim, B., Xu, C., and Barber, R.~F. (2020).
\newblock Predictive inference is free with the jackknife+-after-bootstrap.
\newblock {\em arXiv:2002.09025}.

\bibitem[Lei et~al., 2018]{lei2018distribution}
Lei, J., G’Sell, M., Rinaldo, A., Tibshirani, R.~J., and Wasserman, L.
  (2018).
\newblock Distribution-free predictive inference for regression.
\newblock {\em Journal of the American Statistical Association},
  113(523):1094--1111.

\bibitem[Lei and Wasserman, 2014]{lei2014distribution}
Lei, J. and Wasserman, L. (2014).
\newblock Distribution-free prediction bands for non-parametric regression.
\newblock {\em Journal of the Royal Statistical Society: Series B (Statistical
  Methodology)}, 76(1):71--96.

\bibitem[Papadopoulos et~al., 2002]{papadopoulos2002inductive}
Papadopoulos, H., Proedrou, K., Vovk, V., and Gammerman, A. (2002).
\newblock Inductive confidence machines for regression.
\newblock In {\em European Conference on Machine Learning}, pages 345--356.
  Springer.

\bibitem[Romano et~al., 2019a]{romano2019malice}
Romano, Y., Barber, R.~F., Sabatti, C., and Cand{\`e}s, E.~J. (2019a).
\newblock With malice towards none: Assessing uncertainty via equalized
  coverage.
\newblock {\em arXiv:1908.05428}.

\bibitem[Romano et~al., 2019b]{romano2019conformalized}
Romano, Y., Patterson, E., and Candes, E. (2019b).
\newblock Conformalized quantile regression.
\newblock In {\em Advances in Neural Information Processing Systems}, pages
  3543--3553.

\bibitem[Romano et~al., 2020]{romano2020classification}
Romano, Y., Sesia, M., and Cand{\`e}s, E.~J. (2020).
\newblock Classification with valid and adaptive coverage.
\newblock {\em arXiv:2006.02544}.

\bibitem[Rudin et~al., 2020]{rudin2age}
Rudin, C., Wang, C., and Coker, B. (2020).
\newblock The age of secrecy and unfairness in recidivism prediction.
\newblock {\em Harvard Data Science Review}, 2(1).

\bibitem[Sadinle et~al., 2019]{sadinle2019least}
Sadinle, M., Lei, J., and Wasserman, L. (2019).
\newblock Least ambiguous set-valued classifiers with bounded error levels.
\newblock {\em Journal of the American Statistical Association},
  114(525):223--234.

\bibitem[Vovk, 2012]{vovk2012conditional}
Vovk, V. (2012).
\newblock Conditional validity of inductive conformal predictors.
\newblock In {\em Asian conference on machine learning}, pages 475--490. PMLR.

\bibitem[Vovk et~al., 2005]{vovk2005algorithmic}
Vovk, V., Gammerman, A., and Shafer, G. (2005).
\newblock {\em Algorithmic learning in a random world}.
\newblock Springer Science \& Business Media.

\bibitem[Yan et~al., 2018]{yan2018weakly}
Yan, C., Yao, J., Li, R., Xu, Z., and Huang, J. (2018).
\newblock Weakly supervised deep learning for thoracic disease classification
  and localization on chest x-rays.
\newblock In {\em Proceedings of the 2018 ACM International Conference on
  Bioinformatics, Computational Biology, and Health Informatics}, pages
  103--110.

\end{thebibliography}
\end{document}